\newcommand\norm[1]{\left\lVert#1\right\rVert}
\newtheorem{theorem}{Theorem}
\newtheorem{lemma}[theorem]{Lemma}
\newtheorem{corollary}[theorem]{Corollary}
\newtheorem*{remark}{Remark}
\begin{document}
	
	\title{{Synchronization in temporal simplicial complexes}}
	\author{Md Sayeed Anwar}\affiliation{Physics and Applied Mathematics Unit, Indian Statistical Institute, 203 B. T. Road, Kolkata 700108, India}
	\author{Dibakar Ghosh}\email{dibakar@isical.ac.in}\affiliation{Physics and Applied Mathematics Unit, Indian Statistical Institute, 203 B. T. Road, Kolkata 700108, India}	

	\begin{abstract}
       The stability analysis of synchronization in time-varying higher-order networked structures (simplicial complexes) is one of the challenging problem due to the presence of time-varying group interactions. In this context, most of the previous studies have been done either on temporal pairwise networks or on static simplicial complexes. Here, for the first time, we propose a general framework to study the synchronization phenomenon in temporal simplicial complexes. We show that the synchronous state exists as an invariant solution and obtain the necessary condition for it to be emerged as a stable state in fast switching regime. We prove that the time-averaged simplicial complex plays the role of synchronization indicator whenever the switching among simplicial topologies are adequately fast. We attempt to transform the stability problem into a master stability function form. Unfortunately, for the general circumstances, the dimension reduction of the master stability equation is cumbersome due to the presence of group interactions. However, we overcome this difficulty in two interesting situations based on either the functional forms of the coupling schemes or the connectivity structure of the simplicial complex, and demonstrate that the necessary condition mimics the form of a master stability function in these cases. We verify our analytical findings by applying them on synthetic and real-world networked systems. In addition, our results also reveal that with sufficient higher-order coupling and adequately fast rewiring, the temporal simplicial complex achieves synchrony even in a very low connectivity regime.           		
	\end{abstract}
      \maketitle	
	\section{Introduction}
	 In the last two decades, networks (graphs) have emerged as a fundamental tool to describe various dynamical process on many real-world and man-made complex systems \cite{newman2003structure,boccaletti2006complex}. However, this network based representation rely on a strong assumption that the unitary elements of a system interact with each other solely through pairwise links. When investigating specific sorts of phenomena (specifically linear processes), such assumption may indeed be appropriate, but it fails significantly short in accurately describing many other situations including nonlinear processes in social interactions \cite{benson2016higher}, brains \cite{functional_brain1,structural_brain}, protein interaction networks \cite{protein}, and collaboration networks \cite{coauthor1}. In all these complex systems, group interactions between unitary elements are omnipresent and simply can not be described utilizing the pairwise network description \cite{beyond_pairwise,majhi2022dynamics}.
	 \par In another development, it emerged that the static network formulation, where the connections between unitary components remain the same over time, has several limitations in describing many real-world scenarios. For example, in social networks \cite{wasserman1994social,skufca2004communication}, in various artificial and neuronal networks \cite{motter2013spontaneous,valencia2008dynamic,bassett2011dynamic}, the underlying connection topologies evolve in time due to continuous creation or termination of the pairwise links.    
	 \par As a result, the classical network representation has been expanded in many ways including simplicial complexes \cite{giusti2016two,aleksandrov1998combinatorial,hypergraph1} that takes into account higher-order interactions incorporating three or even more nodes concurrently, and temporal networks \cite{holme2012temporal} whose connection topologies alter as time passes. Naturally, the more accurate descriptive capability of these generic structures inevitably leads to increasing analytical complexity when analyzing various dynamical phenomena that emerge on them. One such intriguing collective phenomenon is synchronization \cite{syn_book,synchronization2,synchronization3}, where the system components gradually align themselves into a single temporal evolution.
	 \par Until now, investigations of synchronization have been done either in static simplicial complexes \cite{simplicialsync2,simplicialsync3,simplicialsync4,simplicialsync6,anwar2022intralayer,skardal2021higher}, or in temporal networks \cite{ghosh2022synchronized,intra2,rakshit2020intralayer,sinha_tv,faggian2019synchronization,sar2022swarmalators,porfiri2017memory} where group interactions are not taken into consideration. The inclusion of static many-body interactions have been associated with several novel phenomenon including abrupt synchronization transitions \cite{simplicialsync4,kachhvah2022hebbian,skardal2020higher}, multistability \cite{xu2020bifurcation}, chaos \cite{sun2022triadic}, and chimera states \cite{srilena_chimera}. However, most of these studies on synchronization in static simplicial complexes have so far been limited to different extended Kuramoto model \cite{beyond_pairwise}, and very few have been considered for general chaotic systems. In this regard, to investigate analytical stability of synchronization in static simplicial complexes, generalization of master stability function (MSF) approach \cite{msf} has been proposed very recently in \cite{simplicialsync2} and \cite{anwar2022stability}. On the other hand, to investigate the stable synchronization state in temporal pairwise networks, fast switching stability approach is proposed in \cite{stilwell2006sufficient,porfiri2009global}. Other than fast-switching criterion, the stability of synchronization state in pairwise temporal networks can also be achieved by connection graph stability method \cite{belykh2004blinking,belykh2004connection} and simultaneous block diagonalization scheme \cite{intra2,zhang2021unified}. Nevertheless, how the interplay between temporal networks and higher-order structures (simplicial complexes) influences the dynamical processes on interconnected complex systems is still in its infancy \cite{kachhvah2022first,schaub2,chowdhary2021simplicial}, specifically the analytical investigation of synchronization on temporal higher-order networks is still unexplored. This motivates us greatly to investigate the synchronization phenomena on temporal higher-order networks.  
	 \par We here propose the most generic framework to study dynamical processes in temporal simplicial complexes. Specifically, we take into account a group of entirely general (but identical) evolving systems that are arranged on the nodes of an arbitrary higher dimensional simplicial complex, and communicate with one another through many-body interactions and generic coupling schemes (in arbitrary linear or nonlinear form). Further, the pairwise links and other higher-order connections are allowed to rewire stochastically in time. In such broad circumstances, we demonstrate that the complete synchronization state exists as an invariant solution as long as either the coupling functions or the interaction topologies satisfy certain conditions. We thereafter prove that the time-averaged simplicial complex plays the role of synchronization indicator in sufficiently fast switching regime, which generalizes the result of Ref. \cite{stilwell2006sufficient} to the case of group interactions. Using the time-averaged dynamics, we derive the necessary condition for stable synchronization solution for adequately quick switching. The obtained master stability equation is a coupled linear differential equation due the presence of additional intricacy in terms of temporal connections and arbitrary higher-order group interactions. However, our approach mimics the MSF scheme and provides fully decoupled master stability equation with dimension equal to the dimension of unitary dynamical components for couple of instances: (i) when the functional forms of pairwise and higher-order couplings are generalized diffusive with an additional requirement irrespective of the choice of connectivity topology of the simplicial complex; and (ii) when the time-averaged Laplacian corresponding to pairwise interactions commutes with other generalized time-averaged Laplacians attributing to group interactions of different order irrespective of the form of coupling schemes. Thereafter, we verify the obtained analytical results by numerical simulations in real-world and synthetic temporal simplicial complexes whose individual node dynamics are given by three different paradigmatic chaotic systems. Besides, we also  demonstrate that for temporal simplicial complex with random connections between the individuals, synchronization state is achievable even in very low connectivity regime when the higher-order coupling is sufficient and rewiring among the structures of the simplicial complex happens fast. 
	 \par The remainder of this article is organized as follows. In Sec. \ref{model}, we introduce a general mathematical model for  time-varying simplicial complexes. Section \ref{analytical} illustrates all our theoretical results regarding invariance and stability of the synchronization state. Developing these analytical results, in Sec. \ref{numerical} we present the numerical outcomes on coupled dynamical systems. Lastly, in Sec. \ref{conclusion}, we conclude by discussing our results and future study.
	 \begin{figure*}[ht] 
	 	\centerline{
	 		\includegraphics[scale=0.3]{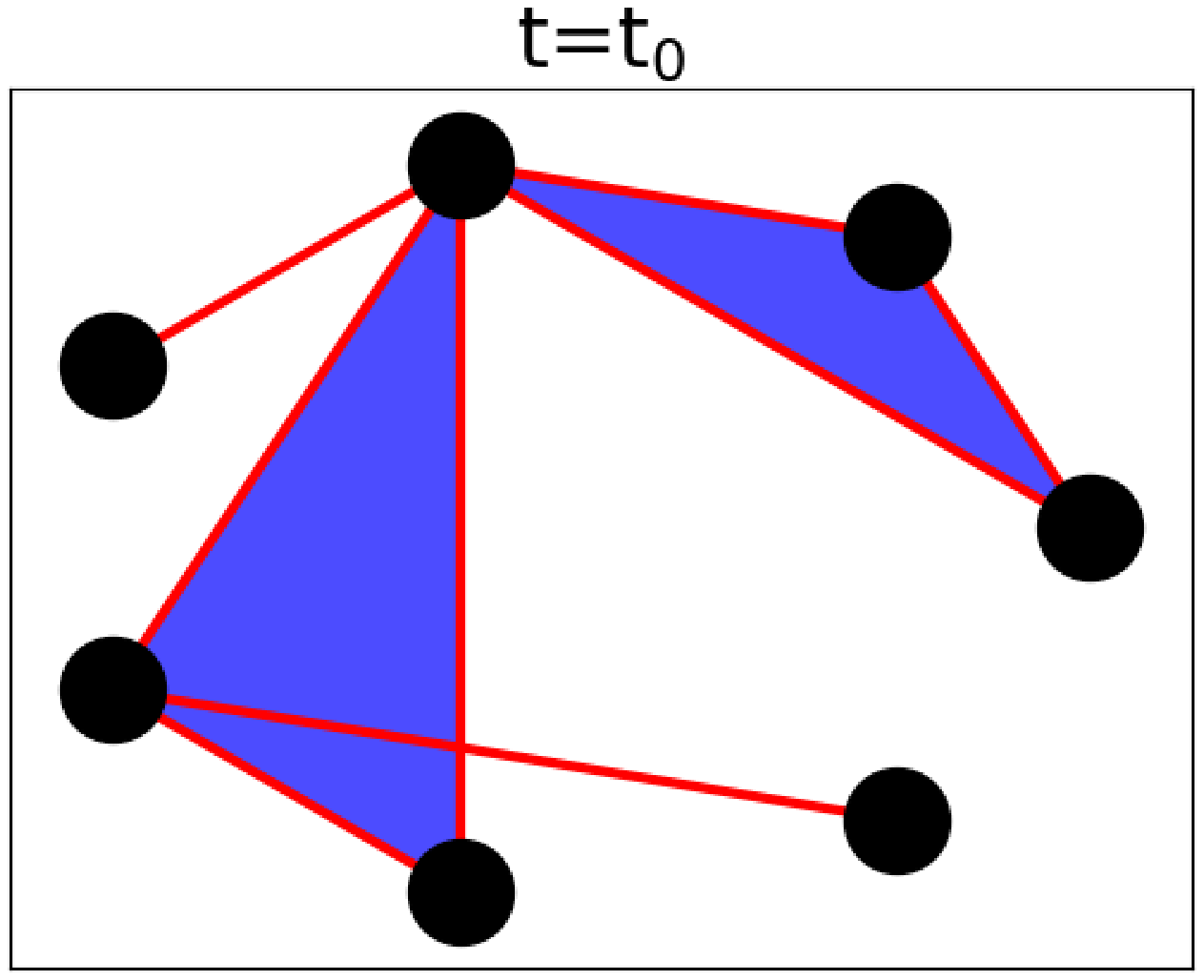}
	 		\includegraphics[scale=0.3]{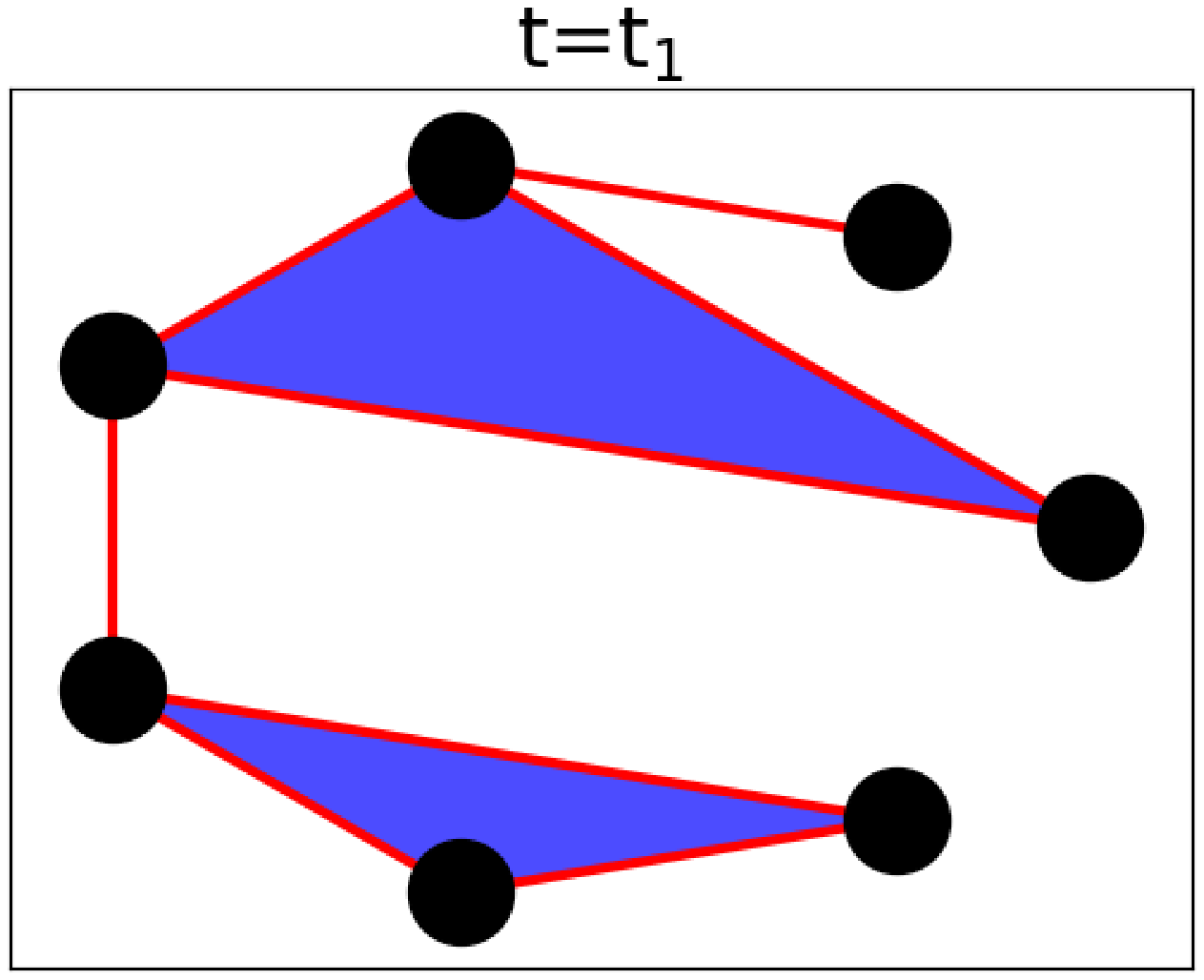}
	 		\includegraphics[scale=0.3]{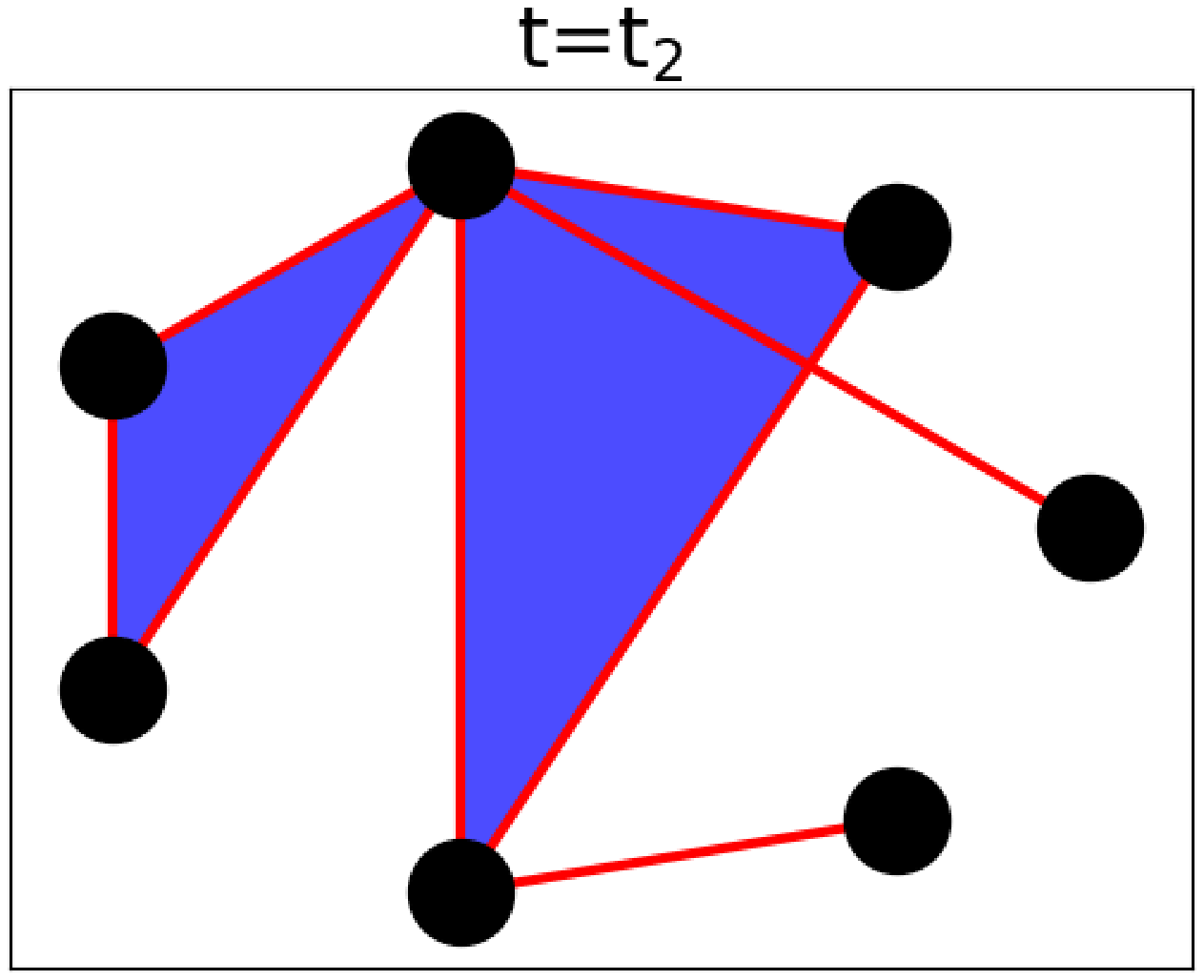}}
	 	\caption{ {\bf Schematic illustration of time-varying simplicial complex with pairwise and triadic interactions.}The left, middle and right panels represent the structure of simplicial complex at time stamps $t=t_{0},t_{1}, \mbox{and}\; t_{2}$, respectively. The solid red lines portrays the pairwise connections ($1$-simplices) and shaded blue triangles represent three-body interactions ($2$-simplices).}
	 	\label{schematic}
	 \end{figure*}                                   
     \section{General model for temporal simplicial complex} \label{model}
     As the objective of our work is to propose a general model that account for group interaction of any order between the dynamical units at any instance of time, we consider a $P$-dimensional simplicial complex composed of $N$ number of nodes. This implies that the nodes interact not only via pairwise links, but also through $2$-simplices (triangles), $3$-simplices (tetrahedrons) and so on. Then, The dynamics of the $P$-dimensional simplicial complex are governed by the following equation of motion 
     \begin{widetext}
     	\begin{equation}\label{gen_model}
     	\begin{array}{l}
     		\dot{\bf x}_i(t) = f({\bf x}_i)+\epsilon_{1} \sum_{k_1=1}^{N} \mathscr{B}^{(1)}_{ik_1}(t) H^{(1)}({\bf x}_i,{\bf x}_{k_1})  
     		+ \epsilon_{2} \sum_{k_1=1}^{N} \sum_{k_2=1}^{N} \mathscr{B}^{(2)}_{ik_1k_2}(t)H^{(2)}({\bf x}_i,{\bf x}_{k_1},{\bf x}_{k_2})  + \cdots+ \\\\  ~~~~~~~~~~~~~~  \epsilon_{P} \sum_{k_1=1}^{N} \sum_{k_2=1}^{N}\cdots \sum_{k_P=1}^{N} \mathscr{B}^{(P)}_{ik_1k_2\cdots k_P}(t)H^{(P)}({\bf x}_i,{\bf x}_{k_1},{\bf x}_{k_2},\cdots,{\bf x}_{k_P}),
     	\end{array}
     \end{equation}
     \end{widetext}
     where the $D$-dimensional state vector ${\bf x}_i(t)$ denotes the dynamics of node $i$, $f : \mathbb{R}^D \rightarrow  \mathbb{R}^D$ provides an illustration of the individual node dynamics assumed to be the same for all the dynamical units. $\epsilon_{m}$ ($m=1,2,\cdots,P$) are the real valued constants describing coupling strengths associated with $m$-dimensional simplices. Here $H^{(m)}:  \mathbb{R}^{((m+1) \times D)} \rightarrow  \mathbb{R}^D$ $(m=1,2,\cdots,P)$ are continuously differentiable real valued functions accounting for coupling schemes of various orders. Further, the elements $\mathscr{B}^{(m)}_{ik_1k_2\cdots k_m}(t)$ of the adjacency tensors $\mathscr{B}^{(m)}(t)$ $(m=1,2,\cdots, P)$ describe the connection mechanism between the dynamical units at a time instance $t$. $\mathscr{B}^{(m)}_{ik_1k_2\cdots k_m}(t)=1$, if the set of nodes $\{i,k_1,k_2,\cdots,k_m\}$ are connected through a $m$-dimensional simplex at a time stamp $t$ and zero otherwise. Here, the adjacency tensors are functions of time, i.e., the pairwise and higher-order connections between the nodes vary in time through the stochastic alteration of the entire connection topology with a rewiring frequency $r$. Sufficiently smaller rewiring frequency indicates that the connections between the nodes are almost static in time, whereas comparably higher $r$ indicates very fast swapping of connections between the nodes. To be noted, the proposed system matches the typical scenario of a temporal complex network of $N$ connected dynamical units in the specific case of $P = 1$. As there are no special limitations on the functions $f$, $H^{m}$, and the time-dependent adjacency tensors $\mathscr{B}^{(m)}$ of the simplicial complex, this is the most generic form of system we can take into consideration. For example, a diagrammatic representation of $2$-dimensional temporal simplicial complex composed of $N=7$ nodes is represented in Fig. \ref{schematic}. The nodes interact with each other through pairwise links (solid red lines) and $2$-simplices (shaded blue triangles) at different time instances. The left, middle and right panel display the evolution of pairwise and three-body interaction over three different time stamps, respectively.
     We further define various generalized $m$-degree of a node $i$ at any time instance $t$ corresponds to $m$-simplices by denoting $\frac{1}{m!}{\sum\limits_{k_1=1}^{N}\sum\limits_{k_2=1}^{N}\cdots\sum\limits_{k_m=1}^{N}{\mathscr{B}_{ik_1k_2\cdots k_m}^{(2)}}}(t)$ as $d_{i}^{(m)}(t)$. For instance, $\sum_{j=1}^{N} \mathscr{B}^{(1)}_{ij}(t)=d_{i}^{(1)}(t)$ denotes the number of pairwise links ($1$-simplices) attached to node $t$ at the time point $t$, and $\frac{1}{2}{\sum_{j=1}^{N}\sum_{k=1}^{N} \mathscr{B}^{(2)}_{ijk}(t)=d_{i}^{(2)}(t)}$ represents the number of $2$-simplices to which node $i$ takes part at time instance $t$.      
     \section{Analytical results}\label{analytical}
     Throughout this section, the main object of our investigation is to acquire the necessary condition for stable synchronization solution in the time-varying simplicial complex \eqref{gen_model}. To accomplish this, we initially investigate whether the existence of a complete synchronized state can be guaranteed for the proposed simplicial complex \eqref{gen_model}.  
     \subsection{Existence condition}\label{invariance}
      The simplicial complex \eqref{gen_model} achieves a complete synchronization state when each node advances synchronously with the other nodes. Mathematically, there exists a solution $\mathbf{x}_0(t) \in \mathbb{R}^{D}$ such that, 
     \begin{equation} \label{comp_sync_def}
     	\begin{array}{l}
     		\norm{\mathbf{x}_j(t)-\mathbf{x}_0(t)} \to 0 ~~ \mbox{as}~~ t \to \infty ~~\mbox{for}~~ j=1,2,\dots,N.
     	\end{array}		
     \end{equation} 
     Following this, the corresponding synchronized manifold can be defined as follows, 
     \begin{equation}\label{manifold}
     	\begin{array}{l}
     		\mathcal{M}=\{\mathbf{x}_0(t) \subset \mathbb{R}^{D} : \mathbf{x}_j(t)=\mathbf{x}_0(t), ~~ \mbox{for} ~~ j=1,2,\dots,N 
     		\\ ~~~~~~~~~~~~~~~~~~~~ \mbox{and} ~~ t \in \mathbb{R}^+ \}.	
     	\end{array} 
     \end{equation} 
     In general, a complete synchronous state is not always achievable for the proposed simplicial complex due to the presence of arbitrary coupling scheme between the unitary dynamical units. Therefore, we consider a specific form of the coupling functions which vanishes at the synchronous state, called synchronization noninvasive coupling functions. If the functional forms of coupling schemes are synchronization noninvasive, i.e.,  
     \begin{equation} \label{noninvasive}
     	\begin{array}{l}
     		H^{(1)}({\bf x}_0,{\bf x}_0)=H^{(2)}({\bf x}_0,{\bf x}_0,{\bf x}_0)= \cdots 
     	  \\	~~~~~~~~~~~~~~~ = H^{(P)}({\bf x}_0,{\bf x}_0,{\bf x}_0,\cdots,{\bf x}_0)=0.	
     	\end{array}
     \end{equation} 
     then the existence and invariance of the synchronization manifold $\mathcal{M}$ will be guaranteed since each individual node follows same dynamics given by, 
     \begin{equation}\label{sync_sol1}
     	\begin{array}{l}
     		\dot{\bf x}_0(t) = F({\bf x}_0).
     	\end{array}
     \end{equation}
     At this point, it is important to note that by considering the synchronization noninvasive coupling functions, we are addressing a large class of coupling schemes. For example, generalized diffusive coupling functions, diffusive sine couplings used for coupled Kuramoto oscillators are some specific cases of this noninvasive coupling form.    
     \par Further, the invariance of synchronization state can also be guaranteed for any arbitrary form of coupling functions apart from noninvasive form. The discussion of invariance condition for generic coupling schemes is detailed in Appendix \ref{generic coupling}. 
     
     \subsection{Linear stability analysis}\label{stability analysis}
     Now since it does not take a very strong force to disrupt the synchronization state, we look into the local stability of the synchronization solution. To accomplish this, we consider a small perturbation around the synchronization solution ${\bf x}_0$, i.e., $\delta {\bf x}_i= {\bf x}_i-{\bf x}_0$ and perform the linear stability analysis. As the expressions could appear a bit complicated, we will use the case of $P = 2$-dimensional simplicial complex in the following to avoid complexity in the notations. Thereafter, we will extrapolate the findings to all values of $P$. Hence, for $2$-dimensional simplicial complex, the linearized equation in terms of the stake variables $\delta {\bf x}_i$ can be written as,
     \begin{widetext} 
     \begin{equation} \label{stability_1}
     	\begin{array}{l}
     		\delta \dot{\bf x}_{i}= Jf({\bf x}_{0}) \delta {\bf x}_{i} + \epsilon_{1} \sum_{j=1}^{N} \mathscr{B}^{(1)}_{ij}(t) \big[H^{(1)}_{{\bf x}_{i}}({\bf x}_{0},{\bf x}_{0}) \delta {\bf x}_{i}+ H^{(1)}_{{\bf x}_{j}}({\bf x}_{0},{\bf x}_{0}) \delta {\bf x}_{j} \big] \\\\
     		~~~~~~~~~~ + \epsilon_{2} \sum_{j=1}^{N} \sum_{k=1}^{N}\mathscr{B}^{(2)}_{ijk}(t) \big[H^{(2)}_{{\bf x}_{i}}({\bf x}_{0},{\bf x}_{0},{\bf x}_{0}) \delta {\bf x}_{i}+ H^{(2)}_{{\bf x}_{j}}({\bf x}_{0},{\bf x}_{0},{\bf x}_{0}) \delta {\bf x}_{j} 
     		+ H^{(2)}_{{\bf x}_{k}}({\bf x}_{0},{\bf x}_{0},{\bf x}_{0}) \delta {\bf x}_{k} \big],
     	\end{array}
     \end{equation}
     \end{widetext}
     where $Jf({\bf x}_{0})$ is the Jacobian matrix of $f$ evaluated at the synchronization solution ${\bf x}_{0}$. $H^{(m)}_{{\bf x}_{i}}$ $(m=1,2)$ is the derivative of $H^{(m)}$ with respect to the variable ${{\bf x}_{i}}$. Now to do further analysis, we proceed with the noninvasive coupling functions. The case for arbitrary linear or nonlinear form of coupling functions is discussed in the Appendix \ref{generic coupling}. As for the synchronization noninvasive form, the coupling functions vanish at the synchronous state, it immediately implies that the value of their total derivative also becomes null at the synchronization solution. Mathematically, 
     \begin{equation}\label{total_derivative}
     	\begin{array}{l}
     		H^{(1)}_{{\bf x}_{i}}({\bf x}_{0},{\bf x}_{0})+ H^{(1)}_{{\bf x}_{j}}({\bf x}_{0},{\bf x}_{0})=0 
     		,\mbox{and} \\H^{(2)}_{{\bf x}_{i}}({\bf x}_{0},{\bf x}_{0},{\bf x}_{0})+ H^{(2)}_{{\bf x}_{j}}({\bf x}_{0},{\bf x}_{0},{\bf x}_{0}) + H^{(2)}_{{\bf x}_{k}}({\bf x}_{0},{\bf x}_{0},{\bf x}_{0})=0
     	\end{array}
     \end{equation}. 
     Using the relation \eqref{total_derivative} and the fact that $\sum_{j=1}^{N} \mathscr{B}^{(1)}_{ij}(t)=d_{i}^{(1)}(t)$ and $\sum_{j=1}^{N}\sum_{k=1}^{N} \mathscr{B}^{(2)}_{ijk}(t)=2d_{i}^{(2)}(t)$, the linearized Eq. \eqref{stability_1} can be rewritten as follows,
     \begin{widetext}
     \begin{equation} \label{stability_2}
     	\begin{array}{l}
     		\delta \dot{\bf x}_{i}= Jf({\bf x}_{0}) \delta {\bf x}_{i} - \epsilon_{1} \sum_{j=1}^{N} \mathscr{L}^{(1)}_{ij}(t) H^{(1)}_{{\bf x}_{j}}({\bf x}_{0},{\bf x}_{0}) \delta {\bf x}_{j} \\\\
     		~~~~~~~~~~ - \epsilon_{2} \sum_{j=1}^{N} \sum_{k=1}^{N}\mathscr{\tau}^{(2)}_{ijk}(t) \big[ H^{(2)}_{{\bf x}_{j}}({\bf x}_{0},{\bf x}_{0},{\bf x}_{0}) \delta {\bf x}_{j} + H^{(2)}_{{\bf x}_{k}}({\bf x}_{0},{\bf x}_{0},{\bf x}_{0}) \delta {\bf x}_{k} \big],
     	\end{array}
     \end{equation}
     \end{widetext}
     where $\mathscr{L}^{(1)}(t)$ is the classical graph Laplacian defined as $\mathscr{L}^{(1)}_{ij}(t)=d^{(1)}_{i}(t)\delta_{ij}-\mathscr{B}^{(1)}_{ij}(t)$, and $\tau^{(2)}_{ijk}(t)$ are the elements of the tensor $\mathscr{T}^{(2)}(t)$, given by the relation $\tau^{(2)}_{ijk}(t)=2d^{(2)}_{i}(t)\delta_{ijk}-\mathscr{B}^{(2)}_{ijk}(t)$. Further using the symmetric property of the elements $\tau^{(2)}_{ijk}(t)$ and the concept of generalized Laplacian $\mathscr{L}^{(2)}(t)$, defined as $\mathscr{L}^{(2)}_{ij}(t)=2d^{(2)}_{i}(t)\delta_{ij}-\sum_{k=1}^{N}\tau^{(2)}_{ijk}(t)$, one can simplify the linearized equation \eqref{stability_2} as 
     \begin{widetext}
     \begin{equation} \label{stability_6}
     	\begin{array}{l}
     		\delta \dot{\bf x}_{i}= Jf({\bf x}_{0}) \delta {\bf x}_{i} - \epsilon_{1} \sum_{j=1}^{N} \mathscr{L}^{(1)}_{ij}(t) H^{(1)}_{{\bf x}_{j}}({\bf x}_{0},{\bf x}_{0}) \delta {\bf x}_{j} \\\\
     		~~~~~~~~~~ - \epsilon_{2} \sum_{j=1}^{N} \mathscr{L}^{(2)}_{ij}(t) \big[ H^{(2)}_{{\bf x}_{j}}({\bf x}_{0},{\bf x}_{0},{\bf x}_{0}) + H^{(2)}_{{\bf x}_{k}}({\bf x}_{0},{\bf x}_{0},{\bf x}_{0})\big] \delta {\bf x}_{j}.
     	\end{array}
     \end{equation}
     \end{widetext}
     Thereafter we rewrite the Eq. \eqref{stability_6} in block matrix form by introducing stack vector $\delta \mathbf{X}= [\delta \mathbf{x}_{1}^{tr}, \delta \mathbf{x}_{2}^{tr}, \cdots, \delta \mathbf{x}_{N}^{tr}]^{tr}$, where $[\;\;]^{tr}$ denotes the vector transpose. Eventually the linearized equation in block matrix form becomes
     \begin{widetext}
     \begin{equation}\label{stability_7}
     	\begin{array}{l}
     		\delta \dot{\bf X}=[I_{N} \otimes Jf({\bf x}_{0})-\epsilon_{1}\mathscr{L}^{(1)}(t) \otimes H^{(1)}_{{\bf x}_{j}}({\bf x}_{0},{\bf x}_{0})
     		\\\\ ~~~~~~~~~~~~~~~~~~~~~ -\epsilon_{2} \mathscr{L}^{(2)}(t) \otimes \big\{H^{(2)}_{{\bf x}_{j}}({\bf x}_{0},{\bf x}_{0},{\bf x}_{0}) + H^{(2)}_{{\bf x}_{k}}({\bf x}_{0},{\bf x}_{0},{\bf x}_{0})\big\}] \delta {\bf X},
     	\end{array}
     \end{equation}
     \end{widetext}
     where $I_{N}$ represents the identity matrix of dimension $N$, and $\otimes$ symbolizes the tensor product.
     Further denoting $H^{(1)}_{{\bf x}_{j}}({\bf x}_{0},{\bf x}_{0})$ as $\mathcal{J}H^{(1)}$, $\{H^{(2)}_{{\bf x}_{j}}({\bf x}_{0},{\bf x}_{0},{\bf x}_{0}) + H^{(2)}_{{\bf x}_{k}}({\bf x}_{0},{\bf x}_{0},{\bf x}_{0})\big\}$ as $\mathcal{J}H^{(2)}$ and considering all the higher-order interactions for $P$-dimensional simplicial complex, we can extrapolate the linearized equation for $P$-dimensional simplicial complex in block form as,
     \begin{equation}\label{time_dsimp_linear}
     	\begin{array}{l}
     		\delta \dot{\bf X}=[I_{N} \otimes Jf({\bf x}_{0})-\sum_{m=1}^{P}\epsilon_{m}\mathscr{L}^{(m)}(t) \otimes \mathcal{J}H^{(m)}] \delta {\bf X},
     	\end{array}
     \end{equation}
     where $\mathscr{L}^{(m)}(t)$, $(m=1,2,\cdots,P)$ is the $N\times N$ generalized Laplacian matrix associated with $m$-simplex interaction and $\mathcal{J}H^{(m)}=H^{(m)}_{{\bf x}_{k_1}}({\bf x}_{0},{\bf x}_{0},\cdots,{\bf x}_{0}) + H^{(m)}_{{\bf x}_{k_2}}({\bf x}_{0},{\bf x}_{0},\cdots,{\bf x}_{0})+\cdots+H^{(m)}_{{\bf x}_{k_m}}({\bf x}_{0},{\bf x}_{0},\cdots,{\bf x}_{0})$ is the sum of partial derivatives corresponding to the coupling function $H^{(m)}({\bf x}_i,{\bf x}_{k_1},\cdots,{\bf x}_{k_m})$, $(m=1,2,\cdots,P)$.
     \par The variational equation \eqref{time_dsimp_linear} include two parts: parallel modes, which describe motion along the synchronization manifold, and transverse modes, which describe motion across the synchronization manifold. To achieve a stable synchronization solution, all transverse modes must finally extinct in time. The variational equation should then be decoupled into parallel and transverse modes in order to conduct a linear stability analysis and determine whether the latter ones die out or not. The conventional MSF scheme properly decouples the high-dimensional variational equation to perturbation mode independent low-dimensional equations through an appropriate coordinate transformation that completely diagonalizes the coupling matrices in order to achieve a stable synchronization state for static pairwise network structure. Even for steady simplicial complexes due to the presence of higher-order non-commuting generalized Laplacians, it is not always possible to decouple the variational equation into low-dimensional equations through an appropriate coordinate transformation \cite{simplicialsync2}. As a result, the problem of stability becomes too expensive. Therefore, when one considers the time-varying simplicial complexes, the presence of non-commuting generalized Laplacian matrices over each time stamps create even more difficulty to tackle the stability problem. As a consequence, in the temporal scenario, the traditional MSF formalism can not be directly applicable to simultaneously diagonalize the generalized Laplacian matrices through a suitable coordinate transformation. To overcome this difficulty, we use the fast-switching stability criterion.
     \par Fast switching between the time-frozen network configurations implies that the evolution of network structure is occurring at a rate that is much faster than that of the underlying dynamical process. For the pairwise network structure, it is well established that in the fast switching limit the temporal network structure shows asymptotically stable synchronization manifold whenever the corresponding static time-averaged network has an asymptotically stable synchronized solution \cite{stilwell2006sufficient}. We here extend this result to higher-order network structures of any order. Specifically we prove that for time-varying simplicial complexes of any order $P$ the corresponding time-averaged structure plays the role of synchronization indicator in fast-switching limit. Our result states as follows,
     \begin{theorem} \label{th1}
     	The time varying simplicial complex whose dynamic evolution given by \eqref{gen_model} has an asymptotically stable synchronized solution for sufficiently fast switching whenever the corresponding static time-averaged simplicial complex whose dynamics is given by
     	\begin{widetext}
     	\begin{equation}\label{avg_gen_model}
     		\begin{array}{l}
     			\dot{\bar{{\bf x}}}_i(t) = f(\bar{{\bf x}}_i)+\epsilon_{1} \sum_{k_1=1}^{N} \bar{\mathscr{B}}^{(1)}_{ik_1}(t) H^{(1)}(\bar{{\bf x}}_i,\bar{{\bf x}}_{k_1})  +  \epsilon_{2} \sum_{k_1=1}^{N} \sum_{k_2=1}^{N} \bar{\mathscr{B}}^{(2)}_{ik_1k_2}(t)H^{(2)}(\bar{{\bf x}}_i,\bar{{\bf x}}_{k_1},\bar{{\bf x}}_{k_2}) \\\\ ~~~~~~~~~~~~~~ + \cdots+ \epsilon_{P} \sum_{k_1=1}^{N} \sum_{k_2=1}^{N}\cdots \sum_{k_P=1}^{N} \bar{\mathscr{B}}^{(P)}_{ik_1k_2\cdots k_P}(t)H^{(P)}(\bar{{\bf x}}_i,\bar{{\bf x}}_{k_1},\bar{{\bf x}}_{k_2},\cdots,\bar{{\bf x}}_{k_P})
     		\end{array}
     	\end{equation} 
     	is asymptotically stably synchronized.
     	\end{widetext}	
     \end{theorem}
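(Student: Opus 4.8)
The plan is to carry the fast-switching argument of Ref.~\cite{stilwell2006sufficient} over to the transverse variational dynamics of the simplicial complex. First I would project the block variational equation~\eqref{time_dsimp_linear} onto the subspace transverse to synchronization. Every generalized Laplacian $\mathscr{L}^{(m)}(t)$ annihilates the homogeneous direction $\mathbf{1}_N$, so the subspace $\mathcal{M}^{\perp}$ orthogonal to $\mathrm{span}(\mathbf{1}_N)\otimes\mathbb{R}^{D}$ is invariant under~\eqref{time_dsimp_linear}, and the restricted system reads $\delta\dot{\bf X}_{\perp}=A_r(t)\,\delta{\bf X}_{\perp}$ with $A_r(t)=I_{N-1}\otimes Jf({\bf x}_0(t))-\sum_{m=1}^{P}\epsilon_m\,\hat{\mathscr{L}}^{(m)}(t)\otimes\mathcal{J}H^{(m)}$, where $\hat{\mathscr{L}}^{(m)}(t)$ is the compression of $\mathscr{L}^{(m)}(t)$ to $\mathcal{M}^{\perp}$. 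Asymptotic stability of the synchronized solution of~\eqref{gen_model} is exactly asymptotic stability of the origin of this linear nonautonomous system; for the averaged complex~\eqref{avg_gen_model} the analogous object is $\delta\dot{\bf Y}_{\perp}=\bar A(t)\,\delta{\bf Y}_{\perp}$ with $\bar A(t)=I_{N-1}\otimes Jf({\bf x}_0(t))-\sum_{m=1}^{P}\epsilon_m\,\langle\hat{\mathscr{L}}^{(m)}\rangle(t)\otimes\mathcal{J}H^{(m)}$, where $\langle\,\cdot\,\rangle$ denotes the time average associated with the rewiring, and the hypothesis of the theorem is that this averaged system is (uniformly) asymptotically stable.

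Next I would write $A_r(t)=\bar A(t)+\tilde B_r(t)$ with $\tilde B_r(t)=-\sum_{m=1}^{P}\epsilon_m\,\big(\hat{\mathscr{L}}^{(m)}(t)-\langle\hat{\mathscr{L}}^{(m)}\rangle(t)\big)\otimes\mathcal{J}H^{(m)}$, and establish the two properties that drive the estimate. Because the adjacency tensors $\mathscr{B}^{(m)}$ are $0/1$-valued on a finite node set, each $\hat{\mathscr{L}}^{(m)}(t)$ — and hence $\tilde B_r(t)$ — is bounded in norm uniformly in $t$ and in the rewiring frequency $r$; and because $\langle\hat{\mathscr{L}}^{(m)}\rangle$ is the average of $\hat{\mathscr{L}}^{(m)}$ over a window of length $O(1/r)$, the integral $\int_t^{t+T}\tilde B_r(s)\,ds$ telescopes up to boundary contributions, so that $\sup_t\big\|\int_t^{t+T}\tilde B_r(s)\,ds\big\|\le c/r$ for a fixed window $T$ and a constant $c$ depending only on the network data and the matrices $\mathcal{J}H^{(m)}$. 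Thus $\tilde B_r$ is ``large pointwise but small in average,'' which is precisely the setting of linear averaging.

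I would then invoke the robustness lemma underlying Ref.~\cite{stilwell2006sufficient}: if $\dot{\bf u}=\bar A(t){\bf u}$ is uniformly asymptotically stable, $\|\tilde B_r(t)\|\le M$, and $\sup_t\|\int_t^{t+T}\tilde B_r(s)\,ds\|\le\eta$, then there is $\eta_0>0$ — depending on $M$, $T$, and the decay constants of the averaged flow — such that $\eta<\eta_0$ forces $\dot{\bf v}=[\bar A(t)+\tilde B_r(t)]{\bf v}$ to be uniformly asymptotically stable as well; the mechanism is a near-identity transformation $\delta{\bf X}_{\perp}=(I+U_r(t))\,{\bf z}$ with $U_r=O(1/r)$ that removes the zero-average fluctuation and leaves an $O(1/r)$ perturbation of the averaged system, to which uniform asymptotic stability is robust. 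Taking $\eta=c/r$ and $r$ large enough that $c/r<\eta_0$ gives a threshold $r^{\star}$ beyond which all transverse modes of~\eqref{gen_model} decay, i.e.\ the synchronized solution is asymptotically stable. Finally I would pass from $P=2$ — kept throughout for notational economy — to arbitrary $P$, the argument being term-by-term in $m$ and never using $P=2$ except for bookkeeping.

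The step I expect to be the main obstacle is upgrading ``asymptotically stable'' to ``\emph{uniformly} asymptotically stable'' for the averaged transverse system, since the averaging lemma needs uniformity in the initial time, which does not follow from a pointwise (Lyapunov-exponent) statement alone. I would close this by assuming, as is implicit in any master-stability treatment, that ${\bf x}_0(t)$ lies on a compact invariant set, so that $Jf({\bf x}_0(t))$ and all coefficients are bounded and the averaged variational flow admits a uniform exponential dichotomy; this turns asymptotic stability into uniform asymptotic stability. A secondary point is to fix the averaging window so that it is commensurate with the rewiring period $1/r$, which is what validates the telescoping bound in the second step; this is routine once the window is tied to the switching.
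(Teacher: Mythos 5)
Your proposal is correct and follows essentially the same route as the paper: you split the variational dynamics into parallel and transverse parts (your projection onto the orthogonal complement of $\mathrm{span}(\mathbf{1}_N)\otimes\mathbb{R}^{D}$ is equivalent to the paper's change of coordinates via the eigenbasis $V^{(1)}$ of $\bar{\mathscr{L}}^{(1)}$), identify the time-averaged Laplacian terms as the averaged part of the transverse system, and conclude by the fast-switching result of Stilwell et al., which the paper simply cites as Lemma \ref{lemma1} while you partially unpack its averaging mechanism. Your closing remarks on uniform asymptotic stability and boundedness of ${\bf x}_0(t)$ address a point the paper leaves implicit, but they do not change the argument's structure.
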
   
     Here $\bar{{\bf x}}_{i}$, $(i=1,2,\cdots,N)$ represent the dynamics of the nodes in static time-averaged simplicial complex. $\bar{\mathscr{B}}^{(m)}$, $(m=1,2,\cdots,P)$ are the static time-averaged adjacency tensors corresponding to the time-varying adjacency tensors ${\mathscr{B}}^{(m)}(t)$ can be defined for a real constant $T$ by 
     \begin{equation}\label{avg_adj}
     	\begin{array}{l}
     		\bar{\mathscr{B}}^{(m)}=\frac{1}{T} \bigintsss_{t}^{t+T} \mathscr{B}^{(m)}(z)dz.
     	\end{array}
     \end{equation} 
     Subsequently each time-varying generalized Laplacian matrix possesses a time-averaged static Laplacian given by 
     \begin{equation}\label{avg_lap}
     	\begin{array}{l}
     		\bar{\mathscr{L}}^{(m)}=\frac{1}{T} \bigintsss_{t}^{t+T} \mathscr{L}^{(m)}(z)dz, ~~m=1,2,\cdots,P.
     	\end{array}
     \end{equation}  
     \begin{remark}
     	The time-averaged Laplacians $\bar{\mathscr{L}}^{(m)}$, $(m=1,2,\cdots,P)$ do not necessarily correspond to the generalized Laplacian matrices of a simplicial complex. These matrices only arise as a time-average of each time-varying generalized Laplacians $\mathscr{L}^{(m)}(t)$. Nevertheless, these time-averaged Laplacians inherit the zero-row sum property of the generalized Laplacians $\mathscr{L}^{(m)}(t)$. 
     \end{remark}  
     Now to proof the theorem \ref{th1}, we require the following lemma proposed by Stilwell et al. \cite{stilwell2006sufficient} for fast-switching stability protocol.
     \begin{lemma} \label{lemma1}	
     	Assume a constant T for which the matrix valued function $\mathscr{E}(t)$ satisfies
     	\begin{equation*}
     		\begin{array}{l}
     			\bar{\mathscr{E}}=\frac{1}{T} \bigintsss_{t}^{t+T}{\mathscr{E}(\tau)} d\tau,~\forall~t \in \mathbb{R}^{+},
     		\end{array}
     	\end{equation*} 
     	and 
     	\begin{equation}\label{eq1.lemma1}
     		\begin{array}{l}
     			\dot{\bar{\mathbf{y}}}(t)=[B(t)+\bar{\mathscr{E}}] \bar{\mathbf{y}}(t), \hspace{10pt}\bar{\mathbf{y}}(t_s)=\bar{\mathbf{y}}_s, \hspace{10pt}t \ge t_s,
     		\end{array}
     	\end{equation}
     	is uniformly exponentially stable. Then for sufficiently fast switching 
     	\begin{equation}{\label{eq2.lemma1}}
     		\begin{array}{l}
     			\dot{\mathbf{y}}(t)=[B(t)+\mathscr{E}(t)] \mathbf{y}(t),\hspace{10pt}\mathbf{y}(t_s)=\mathbf{y}_s,\hspace{10pt}t\ge t_s,
     		\end{array}
     	\end{equation}
     	will also be uniformly asymptotically stable. 	
     \end{lemma}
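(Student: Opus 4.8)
\emph{Proof proposal.} This is exactly the fast-switching stability result for linear time-varying systems, so the plan is to turn \eqref{eq2.lemma1} into a \emph{small} perturbation of the averaged system \eqref{eq1.lemma1} by a near-identity (averaging) change of variables, and then invoke robustness of uniform exponential stability. Throughout I use the standing fact that $B(\cdot)$ and $\mathscr{E}(\cdot)$ are bounded uniformly in $t$ --- here $B$ is built from the Jacobian along the bounded synchronous orbit and $\mathscr{E}$ takes values in the finite set of frozen Laplacian configurations --- and I read ``sufficiently fast switching'' as: the effective averaging window $T$ (after the rescaling $T\mapsto T/s$ that accompanies a switching parameter $s\to\infty$) can be made as small as desired.

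\emph{The averaging transformation.} Put $N(t)=\int_{t_{s}}^{t}\bigl(\mathscr{E}(\tau)-\bar{\mathscr{E}}\bigr)\,d\tau$. By hypothesis the integral of $\mathscr{E}(\tau)-\bar{\mathscr{E}}$ over \emph{any} window of length $T$ vanishes, so $N(t)$ accumulates only a leftover sub-window and
\[
\norm{N(t)}\ \le\ 2T\,\sup_{\tau}\norm{\mathscr{E}(\tau)}\ =:\ \kappa T\qquad\text{for all }t\ge t_{s}.
\]
Set $\mathbf{y}(t)=(I+N(t))\mathbf{z}(t)$; provided $\kappa T<1$ the map $I+N(t)$ is invertible with $\norm{(I+N(t))^{-1}}\le(1-\kappa T)^{-1}$. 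A direct computation, using $\dot{N}(t)=\mathscr{E}(t)-\bar{\mathscr{E}}$ to cancel the oscillatory contribution, converts \eqref{eq2.lemma1} into
\begin{gather*}
\dot{\mathbf{z}}=\bigl[\,B(t)+\bar{\mathscr{E}}+R(t)\,\bigr]\mathbf{z},\\
R(t)=(I+N)^{-1}\bigl[(B+\mathscr{E})N-N(B+\bar{\mathscr{E}})\bigr],
\end{gather*}
so that $\norm{R(t)}\le C\,T$ with $C$ depending only on $\kappa$, $\sup_{t}\norm{B(t)}$ and $\norm{\bar{\mathscr{E}}}$, and uniform in $t_{s}$. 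Establishing this $O(T)$ bound --- i.e.\ confirming that the pointwise-$O(1)$ mismatch $\mathscr{E}-\bar{\mathscr{E}}$ has effectively been replaced by the small running integral $N$ --- is the heart of the argument and the one place where fast switching is genuinely invoked.

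\emph{Conclusion via robustness.} The unperturbed system $\dot{\mathbf{z}}=[B(t)+\bar{\mathscr{E}}]\mathbf{z}$ is precisely \eqref{eq1.lemma1} and hence uniformly exponentially stable, so by the converse Lyapunov theorem for bounded-coefficient linear time-varying systems there is a symmetric $C^{1}$ matrix $P(t)$ with $c_{1}I\preceq P(t)\preceq c_{2}I$ and $\dot{P}+[B+\bar{\mathscr{E}}]^{tr}P+P[B+\bar{\mathscr{E}}]\preceq-c_{3}I$. Along trajectories of the $\mathbf{z}$-system, $V(t,\mathbf{z})=\mathbf{z}^{tr}P(t)\mathbf{z}$ satisfies $\dot{V}\le-(c_{3}-2c_{2}CT)\norm{\mathbf{z}}^{2}$, which is strictly negative definite once $T<c_{3}/(2c_{2}C)$; a standard comparison argument then yields uniform exponential --- in particular uniform asymptotic --- stability of the $\mathbf{z}$-system for sufficiently fast switching. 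Finally, since $I+N(t)$ and $(I+N(t))^{-1}$ are bounded uniformly in $t$, the same stability statement transfers back to $\mathbf{y}$, so \eqref{eq2.lemma1} is uniformly asymptotically stable. The only delicate bookkeeping is the uniform-in-$t_{s}$ estimate $\norm{R(t)}\le CT$ together with the invertibility of $I+N(t)$, both guaranteed by $\kappa T<1$; an equivalent Lyapunov-free route reaches the same conclusion by writing the variation-of-constants identity relating the transition matrices of \eqref{eq2.lemma1} and \eqref{eq1.lemma1}, integrating the perturbation term by parts with the help of $N$, and closing a Gronwall bootstrap once $T$ is small.
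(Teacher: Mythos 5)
Your proposal cannot be compared against an in-paper proof, because the paper does not prove this lemma at all: it is imported verbatim from Stilwell et al.\ \cite{stilwell2006sufficient} and used as a black box in the proof of Theorem \ref{th1}. Judged on its own, your argument is correct and is essentially the standard fast-switching/averaging proof, close in spirit to the original reference: Stilwell et al.\ also work with a converse-Lyapunov function $P(t)$ for the averaged system and show that the decrement survives the replacement of $\bar{\mathscr{E}}$ by $\mathscr{E}(t/\varepsilon)$, the $O(T)$ error being controlled exactly by the boundedness of the running integral of $\mathscr{E}-\bar{\mathscr{E}}$; you package that cancellation instead into the near-identity change of variables $\mathbf{y}=(I+N)\mathbf{z}$, which is a clean equivalent route. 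Your formula $R=(I+N)^{-1}\bigl[(B+\mathscr{E})N-N(B+\bar{\mathscr{E}})\bigr]$ checks out, the bound $\norm{N(t)}\le 2T\sup_{\tau}\norm{\mathscr{E}(\tau)}$ follows from the exact-average hypothesis by splitting $[t_s,t]$ into full windows plus a remainder, and the transfer of exponential stability back through the uniformly bounded $I+N$ and $(I+N)^{-1}$ is fine. Three small caveats worth stating explicitly if this were written out: (i) the lemma as stated fixes $T$, so ``sufficiently fast switching'' must indeed be read through the rescaling $\mathscr{E}(t)\mapsto\mathscr{E}(t/\varepsilon)$ (equivalently, shrinking the effective window), as you do and as the cited reference does; (ii) boundedness of $B(t)$ and $\mathscr{E}(t)$, which the converse Lyapunov theorem and your constants require, is not part of the lemma's hypotheses and has to be supplied by the application (it holds here, since $B$ is the Jacobian along a bounded synchronous orbit and $\mathscr{E}(t)$ ranges over finitely many frozen Laplacian configurations); (iii) for a switched (piecewise-constant) $\mathscr{E}$, $N$ is only piecewise differentiable, so the transformed equation holds almost everywhere, which suffices for the Carath\'eodory solutions involved. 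None of these is a gap in substance.
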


     \begin{proof}[Proof of theorem \ref{th1}]
     	\par Comparing Eqs. \eqref{gen_model} and \eqref{avg_gen_model}, one can easily obtain that for both the time-averaged and time-varying simplicial complexes, the synchronization solution follows same evolution equation given by \eqref{sync_sol1}. Hence without losing the generality we can consider that, at the state of synchrony $\bar{{\bf x}}_i={\bf x}_{0}$, for all $i=1,2,\cdots,N$. Now to study the stability of synchronization for the time-averaged simplicial complex, we assume a small perturbation $\delta \bar{\mathbf{X}}= [\delta \bar{\mathbf{x}}_{1}^{tr}, \delta \bar{\mathbf{x}}_{2}^{tr}, \cdots, \delta \bar{\mathbf{x}}_{N}^{tr}]^{tr}$ around the synchronization solution, defined as $\delta \bar{\mathbf{x}}_{i}=\bar{\mathbf{x}}_{i}-{\mathbf{x}}_{0}$, $(i=1,2,\cdots,N)$. Then for the time-averaged simplicial complex, the variational equation becomes,
     	\begin{equation}\label{avg_dsimp_linear}
     		\begin{array}{l}
     			\delta \dot{\bar{{\bf X}}}=[I_{N} \otimes Jf({\bf x}_{0})-\sum_{m=1}^{P}\epsilon_{m}\bar{\mathscr{L}}^{(m)} \otimes \mathcal{J}H^{(m)}] \delta \bar{{\bf X}}.
     		\end{array}
     	\end{equation}       
     	Now all the time-averaged Laplacians $\bar{\mathscr{L}}^{(m)},\; (m=1,2,\cdots,P)$ are real symmetric matrices satisfying the zero-row sum property. As a result, they all have the same least eigenvalue $\lambda_{1}^{(m)}=0$, with corresponding eigenvector $v_{1}^{(m)}=\frac{1}{\sqrt{N}}(1,1,\cdots,1)^{tr}$ aligned along the synchronization manifold. Again all the time averaged Laplacians are orthonormally diagonalizable by the basis of their eigenvectors $V^{(m)}=[v_{1}^{(m)},v_{2}^{(m)},\cdots,v_{N}^{(m)}]$, i.e., there exists a diagonal matrix $\bar{U}^{(m)}$ such that $V^{(m)^{-1}}\bar{\mathscr{L}}^{(m)}V^{(m)}= \bar{U}^{(m)}$, where the diagonal elements of $\bar{U}^{(m)}$ are the eigenvalues of the time-averaged Laplacian matrices $\bar{\Lambda}^{(m)}=\{0=\lambda_{1}^{(m)}< \lambda_{2}^{(m)}\leq \lambda_{3}^{(m)} \le \cdots \le \lambda_{N}^{(m)}\}$.  
     	\par The variational equation \eqref{avg_dsimp_linear} associated with the time-averaged structure contains both the transverse and parallel modes to the synchronization manifold. To analyze the stability of synchronization manifold, one needs to separate the transverse modes from the parallel ones and scrutinize whether the transverse modes extinct in time or not. We, therefore, project the stake variable $\delta\bar{{\bf X}}$ to the basis of eigenvectors $V^{(1)}$ associated with the time-averaged Laplacian $\bar{\mathscr{L}}^{(1)}$ by introducing new variable $\bar{\zeta}=(V^{(1)}\otimes I_{D})^{-1}\delta \bar{\bf X}$. However, the choice for the basis of eigenvectors is completely arbitrary, one can choose any of the time-averaged generalized Laplacian as a reference for this choice, and all other set of eigenvectors can be mapped to such a basis by means of unitary matrix transformation. The variational equation \eqref{avg_dsimp_linear} in terms of new variable $\zeta$ can then be written as follows,
     	\begin{equation}\label{zeta_dsimp_linear}
     		\begin{array}{l}
     			\dot{\bar{{\zeta}}}=[I_{N} \otimes Jf({\bf x}_{0})-\sum_{m=1}^{P}\epsilon_{m}(V^{(1)^{-1}}\bar{\mathscr{L}}^{(m)}V^{(1)}) \otimes \\ ~~~~~~~~~~~~~~~~~~~~~~~~~~~~~~~~~~~~~~~~~~~~~~~~~ \mathcal{J}H^{(m)}] \bar{\zeta}.
     		\end{array}
     	\end{equation}
     	Now using the orthogonality condition of the eigenvector sets and the fact that \\ $V^{(m)^{-1}}\bar{\mathscr{L}}^{(m)}V^{(m)}=\bar{U}^{(m)}$, we can deduce that $V^{(1)^{-1}}\bar{\mathscr{L}}^{(m)}V^{(1)}$, for all $m=1,2,\cdots,P$, is a real square matrix with null first column and first row, i.e.,   
     	\begin{equation}\label{avg_triangular_2}
     		\begin{array}{l}
     			V^{(1)^{-1}}\bar{\mathscr{L}}^{(m)}V^{(1)}= \bar{W}^{(m)}= 
     			\begin{bmatrix}
     				0 & 0_{1\times (N-1)} \\
     				0_{(N-1)\times 1} & \bar{W}^{(m)}_{2}
     			\end{bmatrix}.
     		\end{array}
     	\end{equation}
     	Plugging back \eqref{avg_triangular_2} in \eqref{zeta_dsimp_linear}, one eventually obtains that the transformed variable $\bar{\zeta}$ decomposes into two independent parts $\bar{\zeta}=[\bar{\zeta}_{\parallel} \;\; \bar{\zeta}_{\perp}]$, where $\bar{\zeta}_{\parallel} \in \mathbb{R}^{D}$ and $\bar{\zeta}_{\perp} \in \mathbb{R}^{(N-1)\times D}$, and their dynamics can be represented as follows, 
     	\begin{subequations}
     		\begin{equation}\label{avg_parallel}
     			\dot{\bar{\zeta}}_{\parallel}=Jf({\bf x}_{0}){\bar{\zeta}_{\parallel}},
     		\end{equation}
     		\begin{equation}\label{avg_transverse}
     			\dot{\bar{\zeta}}_{\perp}= [I_{N-1} \otimes Jf({\bf x}_{0})-\sum_{m=1}^{P}\epsilon_{m}\bar{W}^{(m)}_{2} \otimes \mathcal{J}H^{(m)}] \bar{\zeta}_{\perp}.	
     		\end{equation}
     	\end{subequations}
     	Here $\bar{{\zeta}}_{\parallel}$ reflects the dynamics of the perturbation modes along the synchronization manifold, called parallel modes, and the dynamics of $\bar{{\zeta}}_{\parallel}$ represents the evolution of the perturbation modes across the synchronous manifold, called transverse modes. By the hypothesis of theorem \ref{th1}, the time-averaged simplicial complex \eqref{avg_gen_model} possesses a stable synchronized manifold. Therefore, the transverse variational equation \eqref{avg_transverse} is asymptotically stable. 
     	\par Now we apply the same coordinate transformation to the variational equation \eqref{time_dsimp_linear} associated with temporal simplicial complex as ${\zeta}=(V^{(1)}\otimes I_{D})^{-1}\delta{\bf X}$. Similar to the time-averaged system, in this scenario also the transformed variable decomposes into parallel and transverse components ${\zeta}=[{\zeta}_{\parallel} \;\; {\zeta}_{\perp}]$, whose dynamics are given by the following set of equations
     	\begin{subequations}
     		\begin{equation}\label{time_parallel}
     			\begin{array}{l}
     				\dot{{\zeta}}_{\parallel}= Jf({\bf x}_{0}) {\zeta}_{\parallel},
     			\end{array}
     		\end{equation}		
     		\begin{equation}\label{time_trans}
     			\begin{array}{l}
     				\dot{{\zeta}}_{\perp}= [I_{N-1} \otimes Jf({\bf x}_{0})-\sum_{m=1}^{P}\epsilon_{m}{W}^{(m)}_{2}(t) \otimes \mathcal{J}H^{(m)}] {\zeta}_{\perp},
     			\end{array}
     		\end{equation}   
     	\end{subequations}
     	where ${W}^{(m)}_{2}(t)$ $(m=1,2,\cdots,P)$ are $(N-1)\times(N-1)$ time-varying matrices satisfying the relation
     	\begin{equation}\label{time_triangular_2}
     		\begin{array}{l}
     			V^{(1)^{-1}}{\mathscr{L}}^{(m)}(t)V^{(1)}= {W}^{(m)}(t)= 
     			\begin{bmatrix}
     				0 & 0_{1\times(N-1)} \\
     				0_{(N-1)\times 1} & {W}^{(m)}_{2}(t)
     			\end{bmatrix}.
     		\end{array}
     	\end{equation} 
     	Now, from the relation between time-averaged and time-varying Laplacians \\ $\bar{\mathscr{L}}^{(m)}=\frac{1}{T} \bigintsss_{t}^{t+T} \mathscr{L}^{(m)}(z)dz$, and using \eqref{avg_triangular_2} and \eqref{time_triangular_2}, we can derive that 
     	\begin{equation}\label{avg_relation}
     		\begin{array}{l}
     			\bar{W}^{(m)}_{2}=\frac{1}{T} \bigintsss_{t}^{t+T} {W}^{(m)}_{2}(z)\;dz.	
     		\end{array}
     	\end{equation}      
     	Therefore, applying Lemma \ref{lemma1} along with \eqref{avg_transverse}, \eqref{time_trans} and \eqref{avg_relation}, we can immediately conclude  that the time-varying simplicial complex will also possesses stable synchronization manifold for sufficiently fast switching whenever corresponding time-averaged system has a stable synchronization solution.    
     \end{proof} 
     Theorem \ref{th1} reflects that the time-averaged Laplacians $\bar{\mathscr{L}}^{(m)}$, not the frozen time Laplacians $\mathscr{L}^{(m)}(t)$, are the indicator for synchronization in time-varying simplicial complex for sufficient fast rewiring. Hence, the dynamics of the transverse modes given by \eqref{avg_transverse} is the required master stability equation for the complete synchronization in the temporal simplicial complex. Since the time-averaged Laplacian $\bar{\mathscr{L}}^{(1)}$ is orthogonally diagonalizable by the basis of eigenvectors $V^{(1)}$, the master stability equation can be rewritten as,  
     \begin{widetext}
     \begin{equation}\label{average_trans2}
     	\begin{array}{l}
     		\dot{\bar{\zeta}}_{{\perp}_{i}}= [Jf({\bf x}_{0})-\epsilon_{1}{\lambda}^{(1)}_{i} \mathcal{J}H^{(1)}]{\bar{\zeta}}_{{\perp}_{i}}-\sum_{m=2}^{P}\epsilon_{m}\sum_{j=2}^{N}\bar{W}^{(m)}_{{2}_{ij}} \mathcal{J}H^{(m)} \bar{\zeta}_{\perp_{j}}, \; i=2,3,\cdots,N.
     	\end{array}
     \end{equation}
     \end{widetext}
     The stability problem of the synchronization solution is then reduced to solving the master stability equation \eqref{average_trans2} along with the nonlinear equation corresponding to synchronization manifold for the estimation of largest Lyapunov exponent. The necessary condition for the synchronization stability is that $\Lambda_{max}$, largest among all the Lyapunov exponents across the synchronization manifold must be negative. In general, $\Lambda_{max}$ is a function of the eigenvalues of time-averaged Laplacian $\bar{\mathscr{L}}^{(1)}$, the time-averaged generalized Laplacians associated with all the higher-order interactions and the coupling strengths, i.e., 
     \begin{equation*}
     	\begin{array}{l}
     		\Lambda_{max} \equiv \Lambda_{max}(\lambda^{(1)}_{1},\lambda^{(1)}_{2},\cdots,\lambda^{(1)}_{N},\bar{\mathscr{L}}^{(2)},\bar{\mathscr{L}}^{(3)},\cdots,\bar{\mathscr{L}}^{(P)},\\ ~~~~~~~~~~~~~~~~~~~~~~~~~~~~~~~~~~~~~ `\epsilon_{1},\epsilon_{2},\cdots,\epsilon_{P}).
     	\end{array}
     \end{equation*}
     \par At this stage, it is important to notice that the evolution of the transverse modes are not independent from each other, further they are coupled through the coefficients $\bar{W}^{(m)}_{{2}_{ij}}$, obtained from \eqref{avg_triangular_2}. This results in $(N-1)D$-dimensional coupled linear master stability equation \eqref{average_trans2} and for which the calculation of Lyapunov exponents become too costly. In general, it is not possible to decouple this master stability equation into lower dimensional equations as the presence of higher-order interactions yield generalized Laplacian matrices which are not simultaneously diagonalizable. Hence, in the most general framework, the perturbation modes transverse to the synchronization manifold are interlinked with each other, and the stability of the synchronous solution has to be inspected without dimensionality reduction of the master stability equation. However, there are relevant instances for which the variational equation \eqref{average_trans2} can be simplified into lower dimensional form, resulting into complete separation of the transverse modes and reducing the dimension equal to the dimension of unitary dynamical unit. The next subsections will illustrate these results in detail.           
     \subsubsection{Dimensionality reduction by means of coupling functions} \label{natural_coupling} 
     To fully decouple the master stability equation, we first consider a specific class of synchronization noninvasive coupling functions. In particular, the functional forms of the coupling schemes are assumed to be generalized diffusive, i.e., $H^{(1)}({\bf x}_{i},{\bf x}_{j})=g^{(1)}({\bf x}_{j})-g^{(1)}({\bf x}_{i})$, $H^{(2)}({\bf x}_{i},{\bf x}_{j},{\bf x}_{k})=g^{(2)}({\bf x}_{j},{\bf x}_{k})-g^{(2)}({\bf x}_{i},{\bf x}_{i})$ and similarly for $P$-body interactions $H^{(P)}({\bf x}_{i},{\bf x}_{j_1},{\bf x}_{j_2},\cdots,{\bf x}_{j_P})=g^{(P)}({\bf x}_{j_1},{\bf x}_{j_2},\cdots,{\bf x}_{j_P})-g^{(P)}({\bf x}_{i},{\bf x}_{i},\cdots,{\bf x}_{i})$, where $g^{(P)}: \mathbb{R}^{PD} \to \mathbb{R}^{D}$. Moreover, we assume the coupling functions satisfy an additional condition given by,
     \begin{equation}\label{natural}
     	\begin{array}{l}
     		g^{(P)}({\bf x},{\bf x},\cdots,{\bf x})= \cdots=g^{(2)}({\bf x},{\bf x})= g^{(1)}({\bf x})
     	\end{array}
     \end{equation}
     Condition \eqref{natural}, for example, represents the fact that a triadic interaction where two nodes are in same state is identical to a dyadic interaction.   
     \par Now, using condition \eqref{natural}, one can deduce that for each  $m=2,3,\cdots,P$,
     \begin{widetext}
     \begin{equation}\label{natural2}
     	\begin{array}{l}
     		g^{(m)}_{{\bf x}_{j_1}}({\bf x}_{0},{\bf x}_{0},\cdots,{\bf x}_{0}) + g^{(m)}_{{\bf x}_{j_2}}({\bf x}_{0},{\bf x}_{0},\cdots,{\bf x}_{0}) +\cdots+g^{(m)}_{{\bf x}_{j_m}}({\bf x}_{0},{\bf x}_{0},\cdots,{\bf x}_{0})= g^{(1)}_{{\bf x}_{j}}({\bf x}_{0}).
     	\end{array}
     \end{equation}
     \end{widetext} 
     This immediately implies that $\mathcal{J}H^{(P)}=\cdots=\mathcal{J}H^{(2)}= \mathcal{J}H^{(1)}$. As a consequence, we can rewrite the variational equation \eqref{avg_dsimp_linear} in this scenario as,
     \begin{equation}\label{natural_stability_1}
     	\begin{array}{l}
     		\delta \dot{\bar{{\bf X}}}=[I_{N} \otimes Jf({\bf x}_{0})-(\sum_{m=1}^{P}\epsilon_{m}{\bar{\mathscr{L}}}^{(m)}) \otimes \mathcal{J}H^{(1)}] \delta \bar{{\bf X}}.
     	\end{array}
     \end{equation}
     Let us now introduce an effective time-averaged Laplacian matrix $\bar{\mathscr{G}}$, given by  $\bar{\mathscr{G}}=\bar{\mathscr{L}}^{(1)}+r_2\bar{\mathscr{L}}^{(2)}+r_3\bar{\mathscr{L}}^{(3)}+\cdots+r_P\bar{\mathscr{L}}^{(P)}$, where $r_m=\frac{\epsilon_{m}}{\epsilon_{1}}$, $(m=2,3,\cdots,P)$. Since, all the time-averaged Laplacian matrices inherit the symmetric zero row-sum property, the time-averaged matrix $\bar{\mathscr{G}}$ also satisfies this property. The variational equation in terms of the effective matrix $\bar{\mathscr{G}}$ can be represented as  
     \begin{equation}\label{natural_stability_2}
     	\begin{array}{l}
     		\delta \dot{\bar{{\bf X}}}=[I_{N} \otimes Jf({\bf x}_{0})-\epsilon_{1}{\bar{\mathscr{G}}} \otimes \mathcal{J}H^{(1)}] \delta \bar{{\bf X}}.
     	\end{array}
     \end{equation}
     Thereafter, projecting the stake variable $\delta \bar{{\bf X}}$ onto the basis of eigenvectors $V$ corresponding to the effective time-averaged Laplacian $\bar{\mathscr{G}}$ and introducing new variable $\eta=(V\otimes I_{D})^{-1}\delta \bar{{\bf X}}$, we get 
     \begin{equation}\label{natural_stability_3}
     	\begin{array}{l}
     		\dot{\eta}=[I_{N} \otimes Jf({\bf x}_{0})-\epsilon_{1}{\bar{{\Gamma}}} \otimes \mathcal{J}H^{(1)}]{\eta},
     	\end{array}
     \end{equation}
     where $V^{{-1}}\bar{\mathscr{G}}V={\bar{{\Gamma}}}=diag\{0=\gamma_{1}, \gamma_{2}, \gamma_{3}, \cdots, \gamma_{N}\}$ is a diagonal matrix with the eigenvalues $\{0=\gamma_{1} < \gamma_{2} \le \gamma_{3} \le \cdots \le \gamma_{N}\}$ of the effective matrix $\bar{\mathscr{G}}$ in the diagonal entries. Therefore, in this case, the transverse modes fully decouples into $N-1$ numbers of $D$-dimensional linear equations as,    
     \begin{equation}\label{natural_stability_4}
     	\begin{array}{l}
     		\dot{\eta_{i}}=[Jf({\bf x}_{0})-\epsilon_{1}{{{\gamma}}}_{i}\mathcal{J}H^{(1)}]{\eta}_{i}, \;\; i=2,3,\cdots,N
     	\end{array}
     \end{equation}
     {\it Hence, for the temporal simplicial complex in which unitary components are interconnected through generalized diffusive scheme along with an additional coupling condition \eqref{natural}, the synchronization stability can be achieved by solving the $D$-dimensional $(N-1)$ numbers of uncoupled linear equations $\eqref{natural_stability_4}$ for the calculation of maximum Lyapunov exponent.}       
     \subsubsection{Dimensionality reduction by restricting the topology of connectivity structure} \label{reduction_2}
     Another interesting scenario regarding the dimensionality reduction of the master stability equation occurs when we restrict the topology of the connectivity structure of the time-varying simplicial complex \eqref{gen_model}. We consider the connection topology in such a way that the time-averaged Laplacians $\bar{\mathscr{L}}^{(m)}$, $m=1,2,\cdots,P$ are real symmetric and any one of these Laplacian matrices commute with all the other. Without loss of generality, we assume that the time-averaged Laplacian $\bar{\mathscr{L}}^{(1)}$ which accounts for the pairwise interactions among the unitary elements of the simplicial complex, commutes with all the other generalized time-averaged Laplacians related to higher-order interactions.
     \par As the time-averaged Laplacians $\bar{\mathscr{L}}^{m}$ are real symmetric matrices, they all are diagonalizable by the set of linearly independent eigenvectors $V^{(m)}$, i.e., $V^{(m)^{-1}}\bar{\mathscr{L}}^{(m)}V^{(m)}= diag \{0=\lambda_{1}^{(m)},\lambda_{2}^{(m)},\lambda_{3}^{(m)}, \cdots, \lambda_{N}^{(m)}\}$. Again all the generalized time-averaged Laplacians $\bar{\mathscr{L}}^{(m)}$, $m=2,3,\cdots,P$ are commuting with $\bar{\mathscr{L}}^{(1)}$. So, for each $m=2,3,\cdots,P$ the time-averaged Laplacians $\bar{\mathscr{L}}^{(1)}$ and $\bar{\mathscr{L}}^{(m)}$ share same basis of orthogonal eigenvectors that diagonalize them, i.e., $V^{(1)}=V^{(m)}$. This means that all the time-averaged Laplacians $\bar{\mathscr{L}}^{m}$, $(m=1,2,\cdots,P)$ are orthogonally diagonalizable by the set of eigenvectors $V^{(1)}$, i.e., $V^{(1)^{-1}}\bar{\mathscr{L}}^{(m)}V^{(1)}= diag \{0=\lambda_{1}^{(m)},\lambda_{2}^{(m)},\lambda_{3}^{(m)}, \cdots, \lambda_{N}^{(m)}\}$.
     \par Therefore, in this case, the master stability equation \eqref{average_trans2} can be rewritten as  
     \begin{equation}\label{commutative_trans2}
     	\begin{array}{l}
     		\dot{\bar{\zeta}}_{{\perp}_{i}}= [Jf({\bf x}_{0})-\sum_{m=1}^{P}\epsilon_{m}{\lambda}_{i}^{(m)}\mathcal{J}H^{(m)}] \bar{\zeta}_{{\perp}_{i}}, \\\\~~~~~~~~~~~~~~~~~~~~~~~~~~~~~~~~~~~~~~~~~~~~~~~ i=2,3,\cdots,N.
     	\end{array}
     \end{equation}
     Surely it is a $(N-1)$ numbers of $D$-dimensional uncoupled system of equation. {\it Hence for the temporal simplicial complex \eqref{gen_model} with connection topology between unitary elements such that the time-averaged Laplacian associated with pairwise connectivity commutes with all the other time-averaged Laplacians corresponds to higher-order interactions, the master stability equation can be fully decoupled into lower dimensional equations of dimension equals to the dimension of unitary dynamical unit.} 
     \begin{corollary}\label{col1}
     	Let us consider $P=2$, i.e., only two and three-body interactions are allowed among the unitary components of the simplicial complex \eqref{gen_model} and further the time-averaged Laplacian $\bar{\mathscr{L}}^{(1)}$ related to pairwise interactions have one zero eigenvalue and $N-1$ numbers of equal non-zero eigenvalue. Without loss of generality, we assume $\lambda_{1}^{(1)}=0$ and $\lambda_{i}^{(1)}=\lambda$, for $i=2,3,\cdots,N$. Then the master stability equation \eqref{average_trans2} can be decoupled into $N-1$ numbers of linear $D$-dimensional equations.  
     \end{corollary}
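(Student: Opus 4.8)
The plan is to show that the spectral hypothesis on $\bar{\mathscr{L}}^{(1)}$ is by itself strong enough to put us back in the commuting-Laplacian situation of Sec.~\ref{reduction_2}, so that the decoupled form \eqref{commutative_trans2} applies verbatim with $P=2$. Concretely, since $\bar{\mathscr{L}}^{(1)}$ is real symmetric with vanishing row sums, and by hypothesis its zero eigenvalue is simple (with eigenvector $\mathbf{1}=(1,\dots,1)^{tr}$) while the remaining $N-1$ eigenvalues all equal $\lambda$, one may write $\bar{\mathscr{L}}^{(1)}=\lambda\big(I_N-\tfrac1N\mathbf{1}\mathbf{1}^{tr}\big)$; that is, $\bar{\mathscr{L}}^{(1)}$ is, up to the scalar $\lambda$, the orthogonal projector onto $\mathbf{1}^{\perp}$.

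The first step is to verify that $\bar{\mathscr{L}}^{(1)}$ and $\bar{\mathscr{L}}^{(2)}$ commute. Here I use that $\bar{\mathscr{L}}^{(2)}$ is real symmetric (inherited from the construction of the generalized Laplacians) and, by the Remark above, has vanishing row sums, hence also vanishing column sums; therefore $\mathbf{1}\mathbf{1}^{tr}\bar{\mathscr{L}}^{(2)}=\bar{\mathscr{L}}^{(2)}\mathbf{1}\mathbf{1}^{tr}=0$, and consequently $\bar{\mathscr{L}}^{(1)}\bar{\mathscr{L}}^{(2)}=\lambda\bar{\mathscr{L}}^{(2)}=\bar{\mathscr{L}}^{(2)}\bar{\mathscr{L}}^{(1)}$. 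The second step is to fix a suitable common eigenbasis: because the transverse eigenvalue $\lambda$ of $\bar{\mathscr{L}}^{(1)}$ is maximally degenerate, its $\lambda$-eigenspace is exactly $\mathbf{1}^{\perp}$, which is invariant under the symmetric matrix $\bar{\mathscr{L}}^{(2)}$; hence the reference basis $V^{(1)}$ used to derive \eqref{average_trans2} may be chosen to consist of $\tfrac1{\sqrt N}\mathbf{1}$ together with an orthonormal eigenbasis of $\bar{\mathscr{L}}^{(2)}\big|_{\mathbf{1}^{\perp}}$. With this choice $V^{(1)^{-1}}\bar{\mathscr{L}}^{(2)}V^{(1)}=\mathrm{diag}\{0,\lambda_2^{(2)},\dots,\lambda_N^{(2)}\}$, so in the block decomposition \eqref{avg_triangular_2} the matrix $\bar{W}^{(2)}_{2}$ is diagonal and all off-diagonal couplings $\bar{W}^{(2)}_{2_{ij}}$ ($i\neq j$) in \eqref{average_trans2} vanish.

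Carrying out the final step, substituting $\bar{W}^{(2)}_{2_{ij}}=\lambda_i^{(2)}\delta_{ij}$ and $\lambda_i^{(1)}=\lambda$ into \eqref{average_trans2} with $P=2$ gives precisely
\begin{equation*}
\dot{\bar{\zeta}}_{\perp_i}=\big[Jf({\bf x}_0)-\epsilon_1\lambda\,\mathcal{J}H^{(1)}-\epsilon_2\lambda_i^{(2)}\,\mathcal{J}H^{(2)}\big]\bar{\zeta}_{\perp_i},\qquad i=2,3,\dots,N,
\end{equation*}
which is the $P=2$ instance of \eqref{commutative_trans2}, i.e. $N-1$ uncoupled linear $D$-dimensional equations, as claimed. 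The only genuinely delicate point I anticipate is the justification that the fixed reference basis $V^{(1)}$ appearing in \eqref{average_trans2} may legitimately be taken to diagonalize $\bar{\mathscr{L}}^{(2)}$ simultaneously — in general this is impossible (it is exactly the non-commuting-Laplacian obstruction emphasized earlier in Sec.~\ref{stability analysis}), but here it is forced by the maximal degeneracy of the transverse spectrum of $\bar{\mathscr{L}}^{(1)}$, so once the commutation identity and the $\bar{\mathscr{L}}^{(2)}$-invariance of $\mathbf{1}^{\perp}$ are established the remainder is routine.
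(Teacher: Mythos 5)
Your argument is correct, and it reaches the same decoupled equation as the paper, but by a noticeably different route. The paper keeps the reference eigenbasis $V^{(1)}$ of $\bar{\mathscr{L}}^{(1)}$ generic, observes that the block $\bar{W}^{(2)}_{2}$ in \eqref{p=2_avg_triangular} is real symmetric, and then performs a \emph{second} orthogonal transformation $\phi=(V'\otimes I_{D})^{-1}\bar{\zeta}_{\perp}$ that diagonalizes $\bar{W}^{(2)}_{2}$; the crucial point there is that the pairwise term in \eqref{p=2_trans} is $\epsilon_{1}\lambda\,\mathcal{J}H^{(1)}$ for every transverse mode (a consequence of the degenerate spectrum), so it is untouched by $V'$, yielding \eqref{p=2_trans2}. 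You instead exploit the same degeneracy at the level of the Laplacians: writing $\bar{\mathscr{L}}^{(1)}=\lambda\big(I_{N}-\tfrac{1}{N}\mathbf{1}\mathbf{1}^{tr}\big)$ and using the symmetric zero-row-sum property of $\bar{\mathscr{L}}^{(2)}$, you show $\bar{\mathscr{L}}^{(1)}\bar{\mathscr{L}}^{(2)}=\lambda\bar{\mathscr{L}}^{(2)}=\bar{\mathscr{L}}^{(2)}\bar{\mathscr{L}}^{(1)}$, pick a common orthonormal eigenbasis (legitimate, since the paper notes the choice of $V^{(1)}$ is arbitrary and the $\lambda$-eigenspace is all of $\mathbf{1}^{\perp}$, which is $\bar{\mathscr{L}}^{(2)}$-invariant), and thereby land directly in the commuting-Laplacian case of Sec.~\ref{reduction_2}, i.e.\ Eq.~\eqref{commutative_trans2} with $P=2$. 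Both proofs hinge on the identical mechanism (maximal degeneracy of the transverse spectrum of $\bar{\mathscr{L}}^{(1)}$ frees up the rotation within $\mathbf{1}^{\perp}$); what your version buys is the conceptual reduction to the commuting case and the explicit observation that under the corollary's hypotheses commutativity is automatic, which incidentally shows that the paper's subsequent Remark (that the corollary achieves decoupling ``without the time-averaged Laplacians being commutative'') does not actually exhibit a non-commuting instance; what the paper's version buys is that it operates directly on the already-derived coupled equation \eqref{average_trans2} without revisiting the choice of reference basis, and its eigenvalues $\alpha_{i}$ of $\bar{W}^{(2)}_{2}$ coincide with your $\lambda_{i}^{(2)}$, so the final master stability equations agree.
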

     \begin{proof}[Proof of corollary \ref{col1}]
     	Since in this case, $\lambda_{1}^{(1)}=0$ and $\lambda_{i}^{(1)}=\lambda$, for $i=2,3,\cdots,N$. Therefore, the master stability equation can be rewritten as
     	\begin{equation}\label{p=2_trans}
     		\begin{array}{l}
     			\dot{\bar{\zeta}}_{{\perp}_{i}}= [Jf({\bf x}_{0})-\epsilon_{1}{\lambda} \mathcal{J}H^{(1)}]{\bar{\zeta}}_{{\perp}_{i}}-\epsilon_{2}\sum_{j=2}^{N}\bar{W}^{(2)}_{{2}_{ij}} \mathcal{J}H^{(2)} \bar{\zeta}_{\perp_{j}}, \\\\~~~~~~~~~~~~~~~~~~~~~~~~~~~~~~~~~~~~~~~~~~~~~~~ i=2,3,\cdots,N.
     		\end{array}
     	\end{equation}
     	Now $\bar{\mathscr{L}}^{(2)}$ is a real symmetric matrix and orthogonally diagonalizable by its set of eigenvector $V^{(2)}$. Further, $\bar{W}_{2}^{(2)}$ satisfies the relation
     	\begin{equation}\label{p=2_avg_triangular}
     		\begin{array}{l}
     			V^{(1)^{-1}}\bar{\mathscr{L}}^{(2)}V^{(1)}= \bar{W}^{(2)}= 
     			\begin{bmatrix}
     				0 & 0_{1\times (N-1)} \\
     				0_{(N-1)\times 1} & \bar{W}^{(2)}_{2}
     			\end{bmatrix},
     		\end{array}
     	\end{equation}
     	where $V^{(1)}$ is the basis of orthogonal eigenvectors of $\bar{\mathscr{L}}^{(1)}$. Hence using the symmetric property of $\bar{\mathscr{L}}^{(2)}$, orthogonal property of the eigenvectors and the relation \eqref{p=2_avg_triangular}, one can conclude that $\bar{W}^{(2)}_{2}$ is also real symmetric matrix of order $N-1$. Consequently, it is orthogonally diagonalizable by its set of eigenvectors $V'$, i.e., $(V')^{-1}\bar{W}^{(2)}_{2}V'= diag\{\alpha_{1},\alpha_{2},\cdots,\alpha_{N-1}\}$, where $\alpha_{i}$, $(i=1,2,\cdots,N-1)$ are the eigenvalues of $\bar{W}^{(2)}_{2}$. 
     	\par Now, projecting the transverse components $\bar{{\zeta}}_{\perp_i}$ to the basis of eigenvector $V'$ and introducing new variable $\phi=(V'\otimes I_{D})^{-1}\bar{{\zeta}}_{\perp}$, the variational equation \eqref{p=2_trans} becomes
     	\begin{equation}\label{p=2_trans2}
     		\begin{array}{l}
     			\dot{\phi}_{{i}}= [Jf({\bf x}_{0})-\epsilon_{1}{\lambda} \mathcal{J}H^{(1)}-\epsilon_{2}\alpha_{i} \mathcal{J}H^{(2)}] \phi_{{i}}, \\\\ ~~~~~~~~~~~~~~~~~~~~~~~~~~~~~~~~~~~~~~~~~ i=1,2,\cdots,N-1.
     		\end{array}
     	\end{equation}
     	This is the required fully decoupled master stability equation.     
     \end{proof}
     \begin{remark}
     	Corollary \ref{col1} reflects the fact that one can obtain the fully decoupled form of the master stability equation without the time-averaged Laplacians being commutative. For example, we consider the connection topology of the pairwise interactions to be random network with probability of generating link between any two nodes as $p_{rand}$. Then the corresponding time-averaged Laplacian $\bar{\mathscr{L}}^{(1)}$ is, 
     	\[ \bar{\mathscr{L}}^{(1)}_{ij}= \begin{cases}
     		-p_{rand}, & i\ne j \\
     		(N-1)p_{rand} & i=j.
     	\end{cases}\]
     	Clearly the eigenvalues of $\bar{\mathscr{L}}^{(1)}$ are $\lambda_{1}^{(1)}=0$ and $\lambda_{i}^{(1)}=Np_{rand}$, for each $i=2,3,\cdots,N$. Hence, in this scenario the master stability equation fully decouples by the corollary \ref{col1}.
     \end{remark}
     \section{Numerical results} \label{numerical} 
     In this section, we provide a series of numerical findings to demonstrate our theoretical finding and confirm the applicability of our findings in wide range of scenarios. Particularly, we consider temporal simplicial complexes whose individual node dynamics are given by three different $3$-dimensional $(\mathbf{x}=[x,y,z]^{tr})$ chaotic systems, namely as a real-world example of neuronal evolution, on Sherman model \cite{sherman1994anti,jalil2012spikes} of pancreatic $\beta$-cell, and two prototypical chaotic systems, the Lorenz \cite{strogatz2018nonlinear} and R\"{o}ssler \cite{rossler1976equation} oscillators. We first consider the most prevalent occurrence on Sherman model and subsequently we illustrate the results for decoupled master stability equation on Lorenz and R\"{o}ssler systems. We further continue our analysis on a real-world temporal network structure.
     \par For the numerical simulations, we consider the temporal simplicial complex of $N=100$ nodes, and the pairwise connections between the nodes are rewired at each time instance $\delta t$ with a probability $r\delta t$, where $r$ is the rewiring frequency. All the higher-order structures (e.g., $2$-simplices, $3$-simplices) are then  constructed at each time instance by considering the cliques of dimension $m$ as $m$-simplices. In this way, all the pairwise links and higher-order connections get rewired stochastically at each time instance $\delta t$ with a rewiring frequency $r$. Sufficiently smaller value of $r$ indicates very slow switching between the connectivity structures of the simplicial complex (almost static), while adequately large $r$ corresponds to fast switching among the topologies of the simplicial complexes over the total time period. For simplicity, we here consider only the higher-order interaction up to interconnections between $3$-nodes. A synchronization measure is introduced as average synchronization error by,
     \begin{equation} \label{error}
     	\begin{array}{l}
     		E=\lim\limits_{\mathcal{T}\to\infty}\dfrac{1}{\mathcal{T}}\bigintss_{t_{0}}^{t_{0}+\mathcal{T}} E(\tau)d\tau ,
     	\end{array} 
     \end{equation}              
     where $E(\tau)=\sum\limits_{i,j=1}^{N} \dfrac{\|\mathbf{x}_{j}(\tau)-\mathbf{x}_{i}(\tau)\|}{N(N-1)}$ is the synchronization error at each time stamp $\tau$, $t_{0}$ is the initial transient and $\mathcal{T}$ is an adequately large number. The onset of complete synchrony in the simplicial complex \eqref{gen_model} is indicated by zero value of the average synchronization error $E$. Here, the $4^{th}$-order Runge-Kutta scheme is used for the execution of numerical integration in the temporal simplicial complex \eqref{gen_model} with integration step $\delta t = 0.001$ (for Sherman model) and $\delta t = 0.01$ (for Lorenz and R\"{o}ssler system), in a period of time iteration $10^{5}$ after discarding initial $4 \times 10^{5}$ time iterations. The initial conditions for the numerical integration are selected randomly from the phase space of unitary node dynamics and all the results are obtained by taking average over a total of $10$ realizations. 
     \subsection{The general scenario} \label{coupled_mse}
     \begin{figure*}[ht] 
     	\centerline{
     		\includegraphics[scale=0.3]{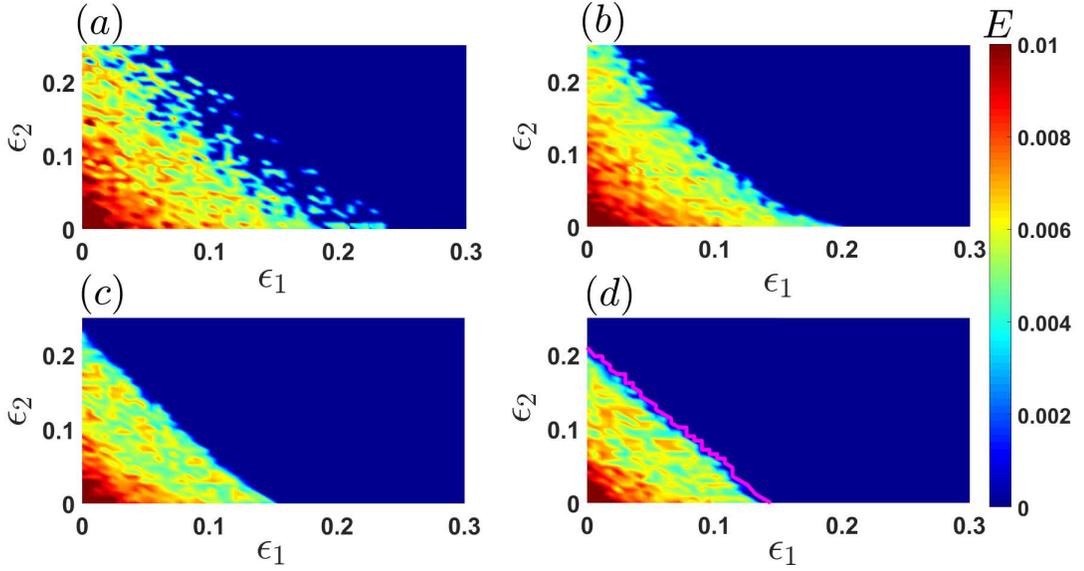}}
     	\caption{{\bf Synchronization in time-varying simplicial complex of coupled Sherman neurons.} Average synchronization error E in the $(\epsilon_{1},\epsilon_{2})$ parameter region for various rewiring frequencies. For (a) $r=10^{-6}$, (b) $r=10^{-3}$, (c) $r=10^{0}$, and (d) $r=10^{3}$, respectively. Vertical colorbar represents the variation of $E$ with deep blue color describing the region of synchronization. The continuous magenta curve superimposed to subfigure (d) indicates the curve of theoretically predicted synchronization thresholds obtained from Eq. \eqref{average_trans2}. The values of other parameters are $p_{sw}=0.15$ and $k_{sw}=4$.}
     	\label{sherman_eps1_eps2}
     \end{figure*} 
     We start our analysis by considering the most general case, where the master stability equation can not be decoupled without restricting conditions on coupling functions or connectivity topology of the simplicial complex. For example, we assume temporal simplicial complex of coupled Sherman model, whose coupling schemes corresponding to pairwise and triadic interactions are given by $H^{(1)}({\bf x}_i,{\bf x}_{j})=[tanh(2(x_j-x_i)),\; 0,\; 0]^{tr}$ and $H^{(2)}({\bf x}_i,{\bf x}_{j},{\bf x}_{k})=[tanh(2(x_j+x_k-2x_i)),\; 0,\; 0]^{tr}$. Then the dynamics of the simplicial complex \eqref{gen_model} is represent by,
     \begin{equation} \label{sherman_eq}
     	\begin{array}{l}
     		\beta\dot{x}_{i}=-{\chi}_{Ca}(x_{i})-\chi_{K}(x_{i},y_{i})-\chi_{S}(x_{i},z_{i})
     		\\ ~~~~~~~~~~~~~~~~~~~~~~~ +\epsilon_{1} \sum\limits_{j=1}^{N}{\mathscr{B}_{ij}^{(1)}}(t)tanh(2(x_j-x_i)) 
     		\\ ~~~~~~~~~~~~~~~~~~~~~~~ +\epsilon_{2}\sum\limits_{j=1}^{N}\sum\limits_{k=1}^{N}\mathscr{B}_{ijk}^{(2)}(t)tanh(2(x_j+x_k-2x_i)), \\\\
     		\beta\dot{y}_{i}=\nu [y^{\infty}(x_{i})-y_{i}], \\\\
     		\beta_{s}\dot{z}_{i}=z^{\infty}(z_{i})-z_{i}, 
     	\end{array}
     \end{equation} 	
     where two fast currents: calcium $\chi_{Ca}$,  persistent potassium $\chi_{K}$, and a slow potassium current $\chi_{S}$ are given by $\chi_{S}(x,z)=G_{S}z(x-E_{K})$, $\chi_{K}(x,y)=G_{K}y(x-E_{K})$, and $\chi_{Ca}(x)=G_{Ca}m^{\infty}(x-E_{Ca})$, respectively.
     $x$ is the membrane potential corresponding to the reversal potential $E_{K}=-0.075$, $E_{Ca}=0.025$. $m$, $y$, and $z$ are the voltage dependent gating variables. The maximum conductance and time constants are $G_{Ca}=3.6$, $G_{K}=10$, $G_{S}=4$ and $\beta=0.02$, $\beta_{s}=5$. $\nu=1$, an auxiliary scaling factor, manages the time scale of the persistent potassium channels. The values of the gating variables at steady state are 
     \begin{equation*}
     	\begin{array}{l}
     		m^{\infty}(x)=\{1+\exp[-83.34(x+0.02)]\}^{-1}, \\\\y^{\infty}(x)=\{1+\exp[-178.57(x+0.016)]\}^{-1}, \;\mbox{and}\\\\ z^{\infty}(x)=\{1+\exp[-100(x+0.035345)]\}^{-1}.
     	\end{array}	
     \end{equation*}
     The parameter values are chosen in a manner that results in chaotic bursting behavior in the individual node dynamics of the simplicial complex.
     \par At each time instance $t$, we consider that the topology of the connectivity structure for the pairwise interactions $\mathscr{B}^{(1)}(t)$ are obtained from Watts-Strogatz small-world network algorithm \cite{watts1998collective}, i.e., staring from a ring of $N$ numbers of nodes with $k_{sw}$ nearest neighbors on each side and the edges of the ring are rewired with a certain probability $p_{sw}$. Consequently, the adjacency tensors $\mathscr{B}^{(2)}(t)$ for three-body interactions are obtained by considering the $m$-dimensional cliques as $m$-simplices. All these adjacency matrices and tensors are then rewired in time with a rewiring frequency $r$.    
     \par Figure \ref{sherman_eps1_eps2} delineates the result corresponding to complete synchrony in $(\epsilon_{1},\epsilon_{2})$ parameter plane for different values of rewiring frequencies, where the colorbar indicates the variation of synchronization error $E$ (deep blue color indicates the occurrence of synchronization). The region of synchronization by simultaneously varying $\epsilon_{1} \in [0, 0.3]$ and $\epsilon_{2} \in [0, 0.25]$ for sufficiently slow switching $(r=10^{-6})$ is depicted in Fig. \ref{sherman_eps1_eps2}(a). As observed, synchronization can be achieved for sole presence of pairwise interaction, but the presence of only three-body interaction is not sufficient for the emergence of synchrony. However, as the non-pairwise coupling strength $\epsilon_{2}$ increases, the critical value of the pairwise coupling strength $\epsilon_{1}$ for the emergence of synchronization decreases. We then increase the rewiring frequency to $r=10^{-3}$. and plot a similar parameter space in Fig. \ref{sherman_eps1_eps2}(b). In this situation, the structure of the simplicial complex varies in time more frequently than the previous, where the simplicial complex was almost static in time. As a result synchronization emerges for lower values of coupling strengths. Still the presence of only three-body interaction is not sufficient to achieve synchrony. For much higher rewiring frequencies $r=10^{0}$ (Fig. \ref{sherman_eps1_eps2}(c)) and $r=10^{3}$ (Fig. \ref{sherman_eps1_eps2}(d)), the enhancement in region of synchrony is more prominent, as the connection topology changes much faster in time. Surprisingly, for this higher frequencies, synchronization is achievable using either the pairwise or higher-order interactions only.
     \par To verify that our theoretical findings are in good agreement with the numerical results, we solve the coupled master stability equation \eqref{average_trans2} for the estimation of maximum transverse Lyapunov exponent $\Lambda_{max}$. The solid magenta line superimposed to Fig. \ref{sherman_eps1_eps2}(d), represents the critical stability curve characterized by the zero value of $\Lambda_{max}$. The region below and above (i.e., positive and negative value of maximum Lyapunov exponent) this stability curve indicate the region of synchronization and desynchronization in the $(\epsilon_{1},\epsilon_{2})$ parameter plane. It is clearly observable that the theoretical predictions obtained through time-averaged structure are in complete agreement with the numerical results acquired for sufficiently fast switching $(r=10^{3})$ of the connection topology.
     \begin{figure*}[ht] 
     	\centerline{
     		\includegraphics[scale=0.3]{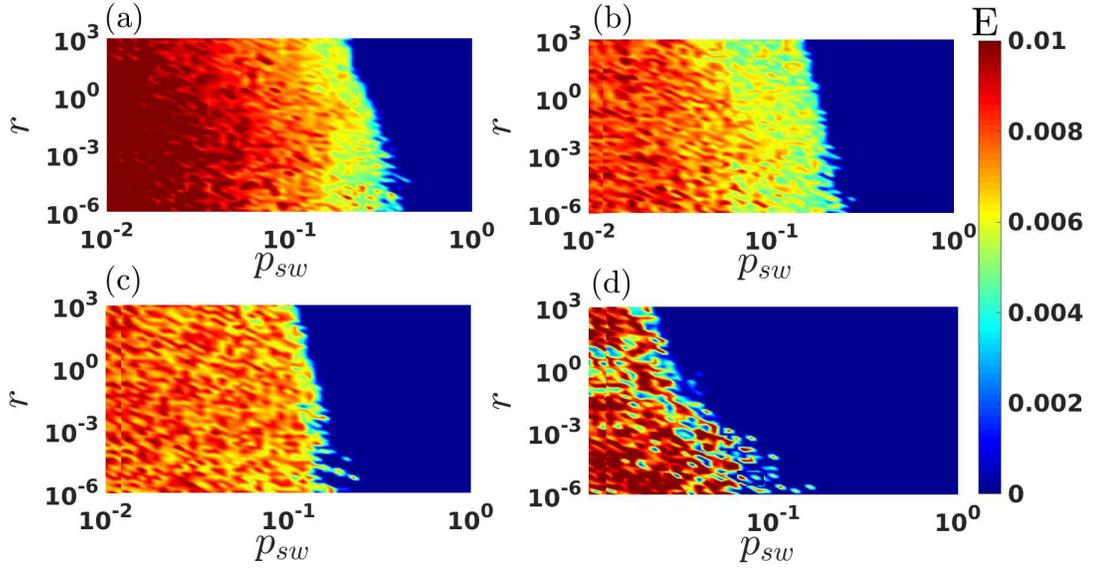}}
     	\caption{{\bf Synchronization in time-varying simplicial complex of coupled Sherman neurons.} Average synchronization error E in the $(p_{sw},r)$ parameter region for different three-body coupling strengths with fixed pairwise coupling strength. For (a) $\epsilon_{2}=0$, (b) $\epsilon_{2}=0.05$, (c) $\epsilon_{2}=0.1$, and (d) $\epsilon_{2}=0.2$, respectively. Vertical colorbar represents the variation of $E$ with deep blue color describing the region of synchronization. Other parameter values are $\epsilon_{1}=0.1$ and $k_{sw}=4$.}
     	\label{sherman_psw_r}
     \end{figure*}
     \par Next we investigate the effect of small-world probability $p_{sw}$ on the emergence of synchronization for varying rewiring frequency $r$. The parameter $p_{sw}$ serves as one of the most crucial variable in this investigation since it establishes the randomness that exists in the connectivity topology of the simplicial complex at any particular time instance. Figure \ref{sherman_psw_r} elucidates the variation of average synchronization error in $(p_{sw},r)$ parameter plane for different values of triadic coupling strength $\epsilon_{2}$ along with fixed pairwise coupling strength $\epsilon_{1}=0.1$. Figure \ref{sherman_psw_r}(a) shows the result corresponding to the instance when the three-body interactions are not considered (i.e., $\epsilon_{2}=0$). We observe that for sufficiently large small-world probability $(p_{sw} \gtrapprox 4.78\times 10^{-1})$, synchronization is achievable regardless of slow or fast rewiring frequencies and the critical values of small-world probability for the emergence of synchronization decreases with increasing rewiring frequency. But, for smaller probability $(p_{sw}\lessapprox 2.1\times 10^{-1})$, no synchronization is plausible despite how quickly the rewiring happens. A qualitatively same behavior can be observed as the three-body interactions are taken into consideration with non-pairwise coupling strengths $\epsilon_{2}=0.05$ (Fig. \ref{sherman_psw_r} (b)) and $\epsilon_{2}=0.1$ (Fig. \ref{sherman_psw_r} (c)). In these cases, just the region of synchronization enhances due to the simultaneous presence of both pairwise and three-body interactions. However, for strong higher-order coupling strength $\epsilon_{2}=0.2$ (Fig. \ref{sherman_psw_r}(d)), a quite interesting result is observed. In this scenario, the synchronization is achievable even with very low small-world probability $(p_{sw} \approx 2.3\times 10^{-2})$ for  sufficiently fast rewiring of connectivity structure. Therefore, one can conclude that with strong higher-order coupling and adequately fast rewiring, the temporal simplicial complex achieves synchronization even for very low edge-rewiring probability $p_{sw}$ of the small-world connection topology. That is, when the connections between individuals of a simplicial complex are constructed following small-world algorithm, comparably little randomness in the connection topology is sufficient to achieve a synchronization state with quick rewiring and strong non-pairwise coupling.           
     \subsection{Lower dimensional cases}         
     We now consider an example to verify our analytical approach that gives fully decoupled lower dimensional master stability equation. We, therefore, start with the instance where the nodes of the simplicial complex are interconnected by means of coupling schemes as discussed in Sec. \ref{natural_coupling}. In particular, a temporal simplicial complex of coupled R\"{o}ssler oscillators is considered whose coupling schemes related to the pairwise and three-body interactions satisfy the natural coupling condition \eqref{natural}. For example, we choose the coupling functions as $H^{(1)}({\bf x}_i,{\bf x}_{j})=[0,\; y_{j}^{3}-y_{i}^{3},\; 0]^{tr}$ and $H^{(2)}({\bf x}_i,{\bf x}_{j},{\bf x}_{k})=[0,\;y_{j}^{2}y_{k}-y_{i}^{3} ,\; 0]^{tr}$. Then the corresponding equation of motion of the temporal simplicial complex is given by, 
     \begin{equation}\label{rossler}
     	\begin{array}{l}
     		\dot{x}_{i}= -y_{i} -z_{i}, \\\\
     		\dot{y}_{i}= x_{i} + a y_{i}+ \epsilon_{1} \sum_{j=1}^{N} \mathscr{B}^{(1)}_{ij}(t) (y_{j}^{3}-y_{i}^{3}) 
     		\\ ~~~~~~~~~~~~~~~~ + \epsilon_{2} \sum_{j=1}^{N}\sum_{k=1}^{N} \mathscr{B}^{(2)}_{ijk}(t) (y_{j}^{2}y_{k}-y_{i}^{3}), \\\\
     		\dot{z}_{i}= b + z_{i} (x_{i}-c), ~~~~~~~ i=1,2,\cdots,N,        
     	\end{array}
     \end{equation}  
     where the system parameters are taken as $a=0.2$, $b=0.2$ and $c=5.7$, for which the unitary nodes display chaotic dynamics. Here also at each time instance the connection between the nodes are established following the Watts-Strogatz small-world graph model.
     \begin{figure*}[ht] 
     	\centerline{
     		\includegraphics[scale=0.3]{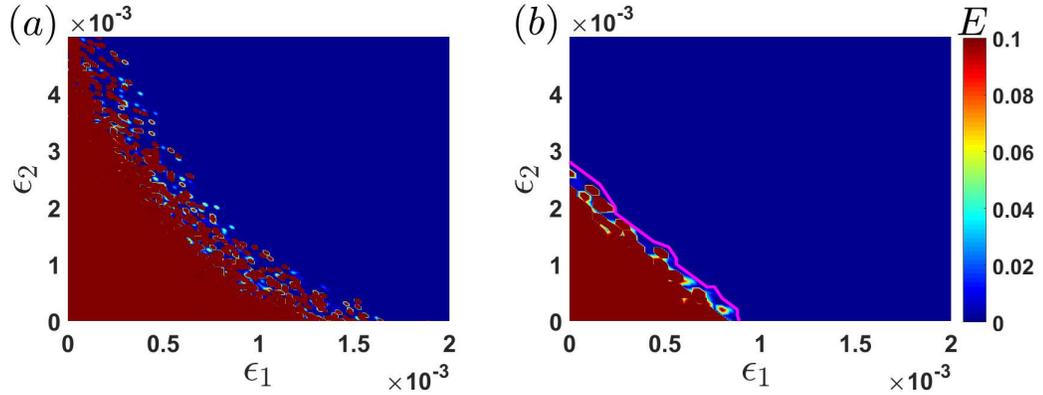}}
     	\caption{ {\bf Synchronization in time-varying simplicial complex of coupled R\"{o}ssler oscillators.} Average synchronization error E in the $(\epsilon_{1},\epsilon_{2})$ parameter region for slow and fast rewiring frequencies. For (a) $r=10^{-4}$, and (b) $r=10^{2}$. Vertical colorbar represents the variation of $E$ with deep blue describing the region of synchronization. The solid magenta line overlaid in panel (b) describes the curve of theoretically predicted synchronization threshold obtained from Eq. \eqref{natural_stability_4}. The values of all the other parameters are $p_{sw}=0.15$ and $k_{sw}=4$. }
     	\label{rossler_eps1_eps2}
     \end{figure*}
     Figure \ref{rossler_eps1_eps2} portrays the region of synchronization and de-synchronization by evaluating average synchronization error $E$ as a function of pairwise and higher-order coupling strengths. The result for almost static simplicial complex $(r=10^{-4})$ is displayed in Fig. \ref{rossler_eps1_eps2}(a). One can observe that in the absence of three-body interactions (i.e., $\epsilon_{2}=0$), the temporal simplicial complex can achieve synchronization state with only pairwise interactions and as the value of $\epsilon_{2}$ increases the critical value of $\epsilon_{1}$ for the emergence of synchronization decreases. However, only the non-pairwise interactions are not sufficient for the achievement of synchrony. On the other hand, when the structure of the simplicial complex changes rapidly in time (i.e., $r=10^{2}$), sole presence of both dyadic and triadic interactions is sufficient for synchronization [Fig. \ref{rossler_eps1_eps2}(b)]. Further in this situation, a larger negative slope guarantees that the region of synchronization (deep blue region) enhances by a huge margin compared to slow switching of the connections. In Fig. \ref{rossler_eps1_eps2}(b), the continuous magenta curve overlaid with the $(\epsilon_{1},\epsilon_{2})$ parameter space indicates the curve corresponding to null value of maximum Lyapunov exponent $\Lambda_{max}$, obtained by solving the analytically derived master stability equation \eqref{natural_stability_4}. The regions below and above this critical curve represent the regions of desynchrony and synchrony, respectively. As perceived, the analytical curve perfectly matches with the numerical obtained critical coupling values for the emergence of synchronization. Hence, in case of this specific coupling scheme, our analytically derived fully decoupled master stability equation \eqref{natural_stability_4} is able to accurately predict the critical coupling strengths for the emergence of synchronization in temporal simplicial complexes.
     \par Next we move to the results of the scenario, where some restrictions on the connection topology of the temporal simplicial complex give rise to fully decoupled master stability equation. We consider that at each time instance the nodes of the simplicial complex are connected through Erd\H{o}s-R\'enyi random network algorithm \cite{erdos2011evolution}, i.e., any two nodes are connected through an edge with a probability $p_{rand}$ and the connection between group of $3$-nodes are established by promoting each cliques of dimension $2$ to $2$-simplices. The Lorenz system is considered as the dynamics of each individual node and therefore, the evolution of the $i$-th node of the temporal simplicial complex is given by the following set of equations,
     \begin{equation}\label{lorenz}
     	\begin{array}{l}
     		\dot{x}_{i}= \sigma(-y_{i} -x_{i})+ \epsilon_{2} \sum_{j=1}^{N}\sum_{k=1}^{N} \mathscr{B}^{(2)}_{ijk}(t) (x_{j}^{2}x_{k}-x_{i}^{3}), \\\\
     		\dot{y}_{i}= \rho x_{i} - x_{i} z_{i}-y_{i}+ \epsilon_{1} \sum_{j=1}^{N} \mathscr{B}^{(1)}_{ij}(t) (y_{j}-y_{i}), \\\\
     		\dot{z}_{i}= x_{i}y_{i}-\beta z_{i}, ~~~~~~~ i=1,2,\cdots,N,        
     	\end{array}
     \end{equation}      
     where the coupling forms corresponding to two and three-body interactions are taken as $H^{(1)}({\bf x}_i,{\bf x}_{j})=[0,\; y_{j}-y_{i},\; 0]^{tr}$ and $H^{(2)}({\bf x}_i,{\bf x}_{j},{\bf x}_{k})=[x_{j}^{2}x_{k}-x_{i}^{3},\; 0,\; 0]^{tr}$, respectively, and the system parameters are considered to be $\sigma=10$, $\beta=\frac{8}{3}$, $\rho=28$ for the occurrence of chaotic dynamics.
     \begin{figure*}[ht] 
     	\centerline{
     		\includegraphics[scale=0.3]{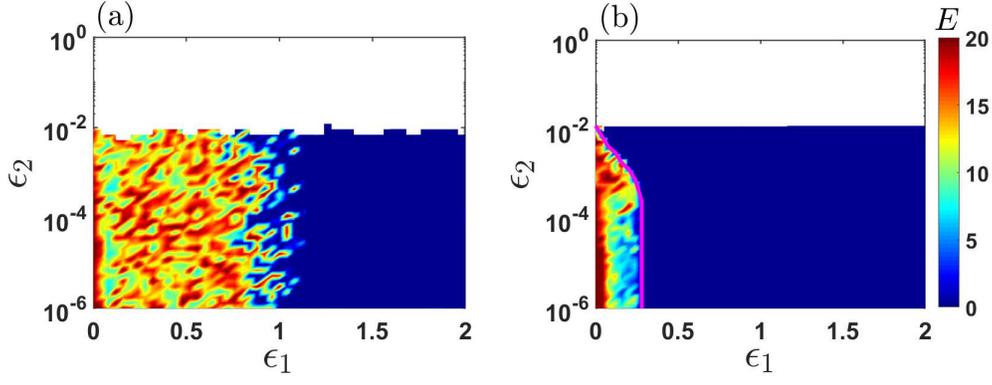}}
     	\caption{{\bf Synchronization in time-varying simplicial complex of coupled Lorenz systems.} Average synchronization error E in the $(\epsilon_{1},\epsilon_{2})$ parameter region for almost static and sufficiently fast rewiring. For (a) $r=10^{-4}$, and (b) $r=10^{2}$. Vertical colorbar represents the variation of $E$ with deep blue region describing the stable synchronization state. The magenta curve superimposed to panel (b) indicates the curve of theoretically predicted synchronization thresholds obtained from Eq. \eqref{p=2_trans2}. The white areas represent the unbounded region. Here the probability of generating an edge between any two node is $p_{rand}=0.1$.}
     	\label{lorenz_eps1_eps2}
     \end{figure*}
     \par Figure \ref{lorenz_eps1_eps2} portrays the average synchronization error $E$ as a function of $\epsilon_{1}$ and $\epsilon_{2}$. For slow rewiring frequency $(r=10^{-4})$, up to $\epsilon_{1} \approx 1$, the synchronization in not achievable with any value of higher-order coupling strength $\epsilon_{2}$ [see fig. \ref{lorenz_eps1_eps2} (a)]. Further increment of pairwise coupling strength $\epsilon_{1}$ results in the occurrence of synchrony. But, as soon as $\epsilon_{2}$ crosses the value $10^{-2}$, synchronization is forbidden for every value of $\epsilon_{1}$, as the system becomes unbounded (white region). On the other hand, for very fast rewiring frequency $(r=10^{-4})$, synchronization is achievable just beyond $\epsilon_{1} \approx 0.28$ [cf. Fig. \ref{lorenz_eps1_eps2}(b)]. Consequently, an enormous improvement in the region of synchrony is observed. Furthermore, in Fig. \ref{lorenz_eps1_eps2}(b), the magenta black line laid over the plot of average synchronization error, represents the coupling thresholds for the emergence of synchronization, obtained by solving the decoupled master stability equation (Eq. \eqref{p=2_trans2}) for maximum Lyapunov exponent $\Lambda_{max}$. As observed, the theoretically predicted synchronization threshold curve is in good agreement with the numerical result corresponding to rapid switching of the connectivity structure.
     \begin{figure*}[ht] 
     	\centerline{
     		\includegraphics[scale=0.3]{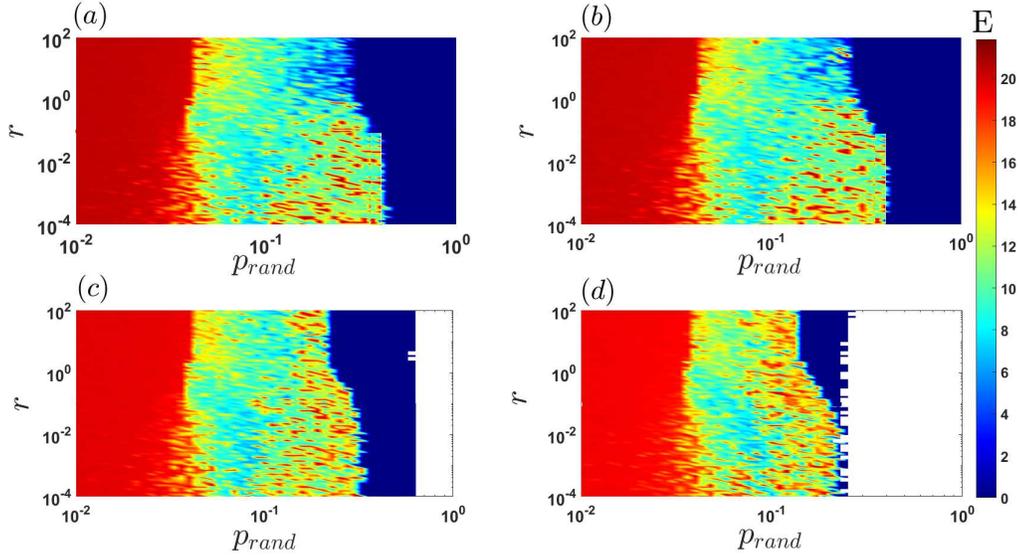}}
     	\caption{{\bf Synchronization in time-varying simplicial complex of coupled Lorenz systems.} Average synchronization error E in the $(p_{rand},r)$ parameter region for various triadic coupling strengths with fixed pairwise coupling strength $\epsilon_{1}=0.1$. For (a) $\epsilon_{2}=0$, (b) $\epsilon_{2}=10^{-5}$, (c) $\epsilon_{2}=10^{-4}$, and (d) $\epsilon_{2}=10^{-3}$. Vertical colorbar represents the variation of average synchronization error $E$ with deep blue domain describing the region of synchronization. The white areas represent the unbounded regions.}
     	\label{lorenz_prand_r}
     \end{figure*}
     \subsubsection{Effect of higher-order coupling and temporal rewiring on emergence of synchrony in low connectivity regime} An essential condition for complete synchronization is the connectivity of underline network topology, hence it can only be achieved when the network structure is suitably connected. However, the synchronization state is achievable in many real-world temporal systems despite their instantaneous sparse connectivity between individuals. Therefore, a natural question arises about the mechanism for the emergence of synchrony in such low connectivity regime. To obtain a plausible strategy, we scrutinize the combined effect of higher-order interaction and temporal switching on the emergence of synchronization in varying connectivity regime of the simplicial complex. Figure \ref{lorenz_prand_r} delineates $E(p_{rand},r)$ for different values of non-pairwise coupling strength $\epsilon_{2}$ with fixed pairwise coupling $\epsilon_{1}=0.1$. As observed, in the sole presence of dyadic interactions (i.e., $\epsilon_{2}=0$ ), the system always achieve synchronization for large values of connectivity probability ($p_{rand} \gtrapprox 4.38 \times 10^{-1}$) regardless of slow or fast rewiring of the topology [see Fig. \ref{lorenz_prand_r}(a)]. In this connectivity regime, synchronization is in fact expected due to the presence of sufficient number of connections between the system individuals. However, the critical connectivity probability for the emergence of synchrony decreases as the rewiring frequency increases. But, for the lower values of $p_{rand}$ ($\lessapprox 3 \times 10^{-1}$), synchronization is forbidden no matter how fast the topologies are rewired. We then introduce the triadic connections $(\epsilon_{2} > 0)$ along with the pairwise ones, and corresponding results are reported in Figs. \ref{lorenz_prand_r}(b)- \ref{lorenz_prand_r}(d). The presence of three-body interactions requires qualitatively smaller critical connection probability $p_{rand}$ for the achievement of synchronization state both for slow and fast switching. For sufficiently fast rewiring $(r=10^{2})$, as the triadic coupling is increased just to $\epsilon_{2} = 10^{-3}$, the threshold to achieve synchrony decreases to $p_{rand} \approx 1.41 \times 10^{-1}$ [Fig. \ref{lorenz_prand_r}(d)]. Nevertheless, the presence of sufficiently larger triadic coupling strength and connectivity probability jointly makes the system unbounded, which is reflected in Figs. \ref{lorenz_prand_r}(c) and \ref{lorenz_prand_r}(d) (white region). Further increment of non-pairwise coupling to $\epsilon_{2}=10^{-2}$ makes the system unbounded beyond $p_{rand} = 10^{-1}$ for almost static simplicial complex $(r=10^{-4})$ without achieving desynchrony to synchrony transition (cf. Fig. \ref{lorenz_prand_r_fast}). In contrary, for sufficiently fast switching, the system goes through desynchrony to synchrony transition and the critical value for the achievement of synchronization state is considerably smaller ($p_{rand} \approx 7.5 \times 10^{-2}$). {\it Therefore, for the simplicial complex with random connections between the individuals, synchronization state is achievable even in very low connectivity regime when the higher-order coupling is sufficiently large and the rewiring among the structures of simplicial complex happens quickly.} The obtained result can be interpreted as a fundamental mechanism for emergence of synchrony in neuronal dynamics and formation of mutual agreement in virtual communities. Notably, our finding indicates that groups of people can unify their perspectives amid relatively sparse exchange of viewpoints, and  neurons can synchronize in spite of limited connections among them, which are clear results of speedy communications that occur in contemporary social sites and fast signal transmissions in neuronal populations. This result may also be useful in describing the emergence of synchronization in power grid networks with limited communication connectivity.       
     \begin{figure}[ht] 
     	\centerline{
     		\includegraphics[scale=0.45]{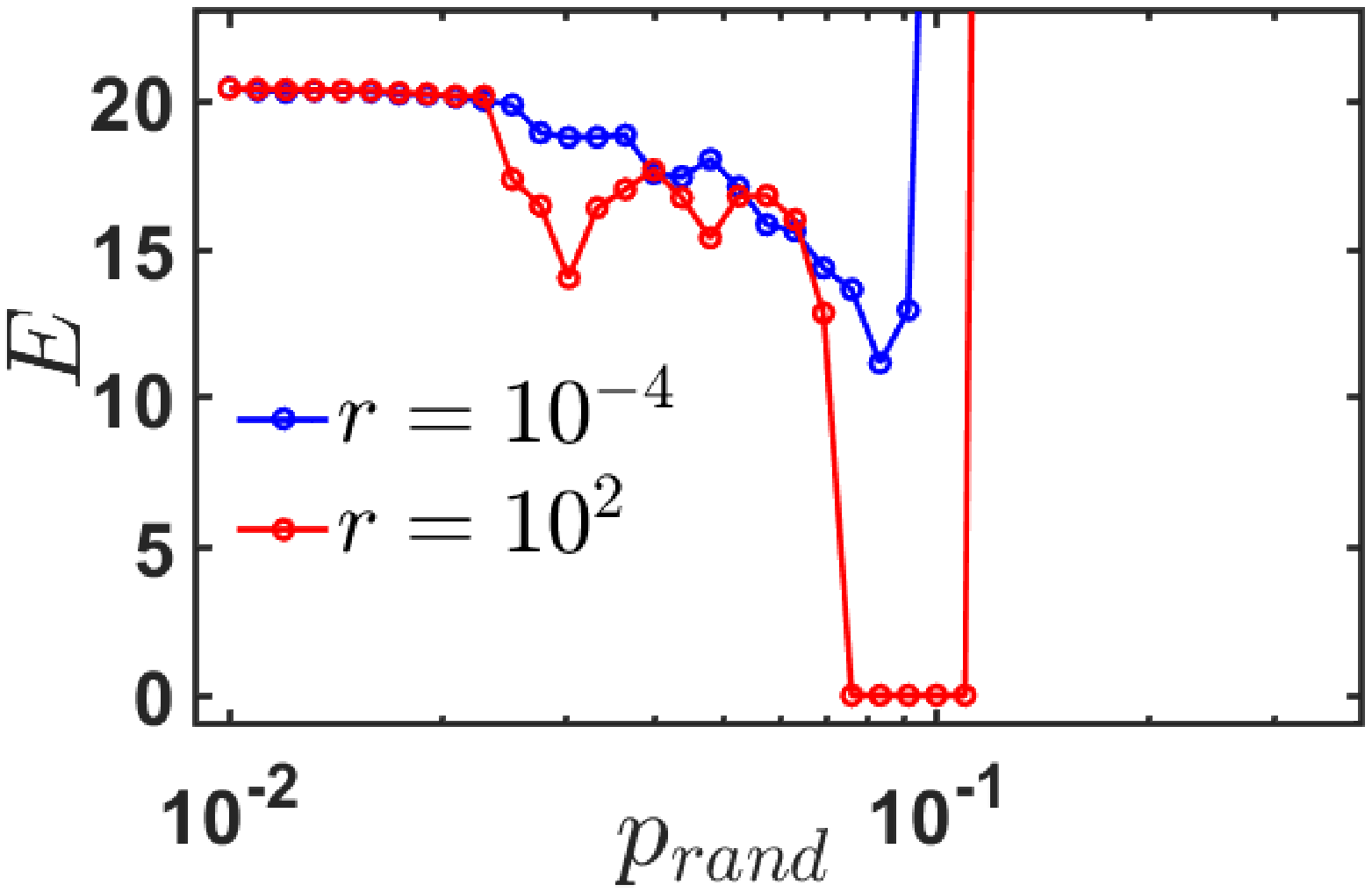}}
     	\caption{ Synchronization in temporal simplicial complex for strong higher-order coupling $\epsilon_{2}=10^{-2}$ along with fixed pairwise coupling $\epsilon_{1}=0.1$. The variation of average synchronization error $E$ as a function of connectivity probability $p_{rand}$ in almost static $(r=10^{-4})$ and sufficiently fast switching $(r=10^{2})$ regime is depicted by blue and red circled curves, respectively. }
     	\label{lorenz_prand_r_fast}
     \end{figure}                
     \subsection{Application on real-world connectivity structure}
     So far, our results have been described on synthetic network structures, i.e., the connections between the individuals are made following random network algorithm, small-world graph model, etc. But, these synthetic models are not always able to illustrate the connection mechanism between the individuals of a real-life complex system. So, we proceed with a real-world connectivity structure to apply our analytical stability theory. As for example, we consider the temporal network model describing face-to-face interactions between individuals of a French primary school registered in the {\it Sociopatterns} project \cite{stehle2011high,gemmetto2014mitigation}. In the original model, a total of $N=242$ individuals interact with each other over a period of time and connection between any two individuals is considered as active if it last for at least $20$ seconds. As a result, a set of $332$ networks are generated over the total period of time. But, at each time instant there are at least two individuals that do not have a connection path between them, i.e., none of these networks are connected at a particular time \cite{stehle2011high}. However, the union of these networks over the total period of time yields a connected network. A necessary condition for complete synchronization in a graph is its connectedness. Complete synchrony does not emerge in a disconnected network and as a result the networks generated at every $20$ seconds in this scenario does not exhibit complete synchronization, but the aggregated network does (Fig. \ref{time_aggregated} illustrates the structure of time aggregated network).
     \begin{figure*}[ht] 
     	\centerline{
     		\includegraphics[scale=0.15]{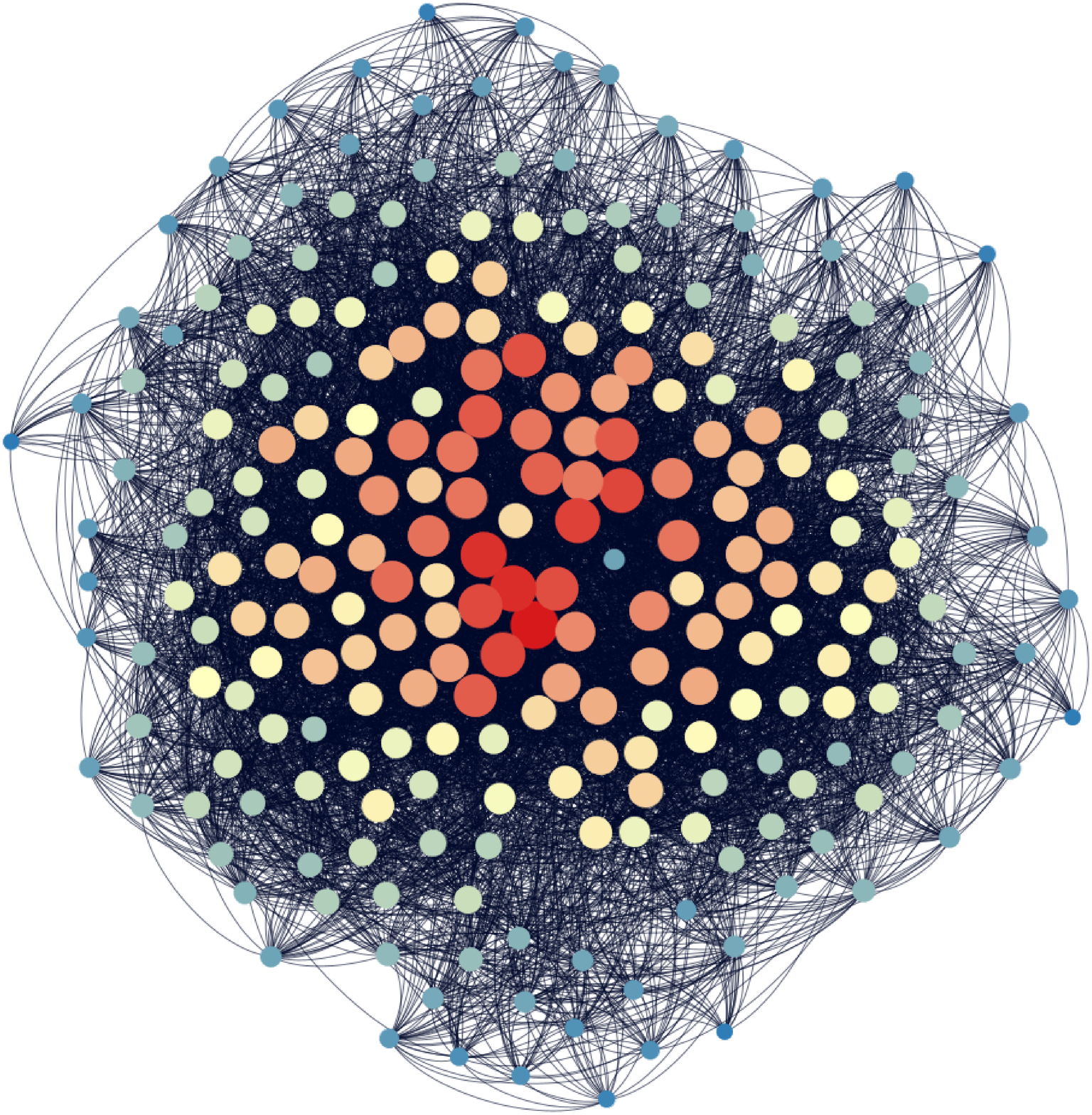}
     		\includegraphics[scale=0.4]{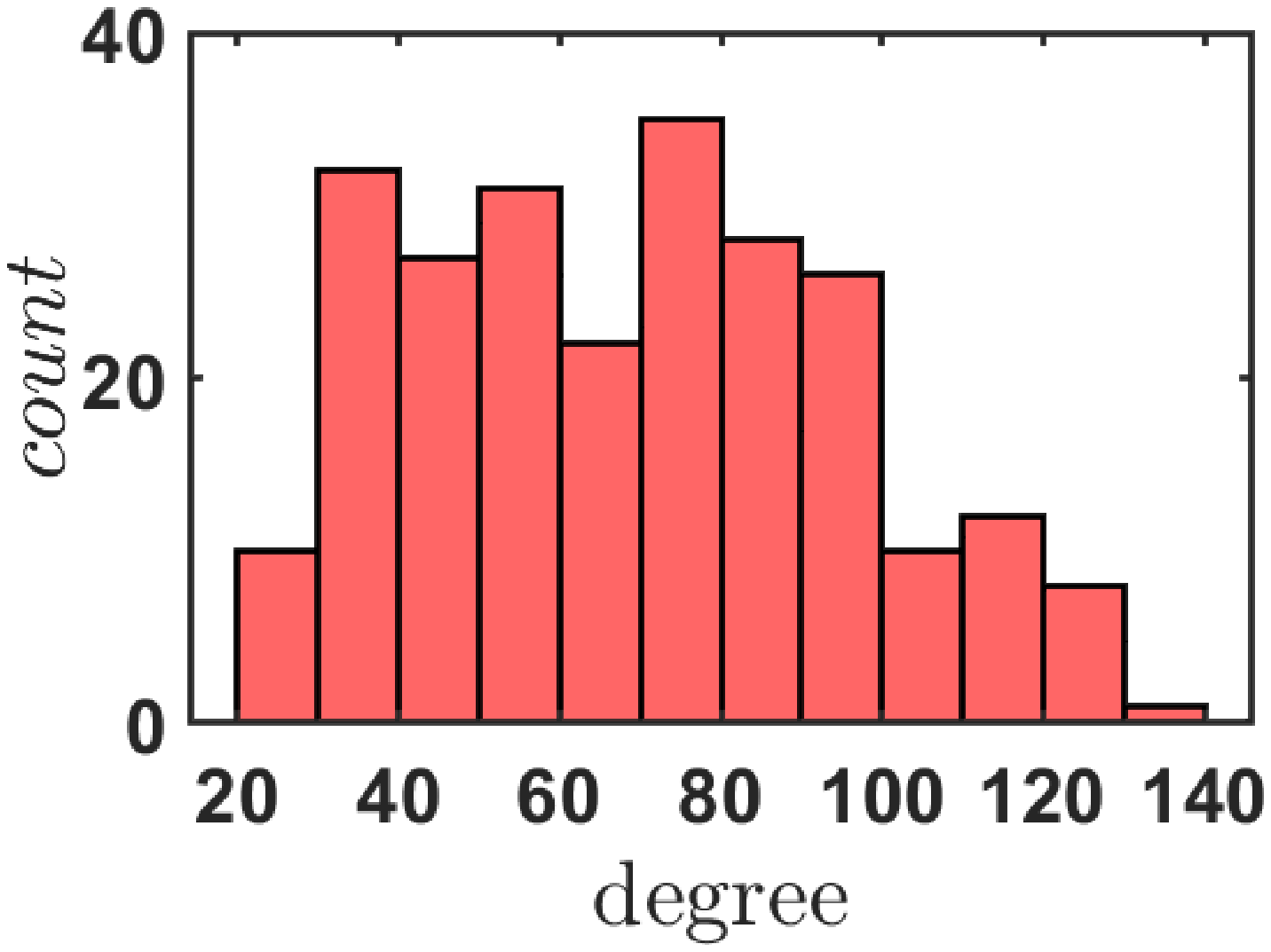}}
     	\caption{The left panel describes the time aggregated structure of the temporal face-to-face interactions between individuals of French primary school. The corresponding degree distribution is displayed in right panel which confirms the connectedness of the aggregated structure.}
     	\label{time_aggregated}
     \end{figure*}
     Now, the interactions among any two individuals in these networks are considered to be only pairwise, i.e., the interactions are considered when any two individuals interacts face-to-face. But, there may be situations where a group of students can have face-to-face interactions or a teacher can interact with a group of students at a time instant. One can assume these group interactions here by observing the occurrence of several triangles ($2$-simplices) or any other higher-dimensional simplices. However, we consider only the interactions up to three-body. In this way, the links represent face-to-face interaction among any two individuals and the triangles ($2$-simplices) indicate face-to-face interactions among three individuals at an instant of time (here, all the triangles are considered to represent three-body interactions). When mimicking the development of opinions in these complex groups, oscillators are generally utilized to explain the components of a coupled dynamical system \cite{castellano2009statistical}. Since it can be interesting to explore synchronization in a more generic scenario in which opinions do not necessarily coalesce to a fixed steady state, we have chosen to employ chaotic oscillators as dynamical units \cite{pluchino2006opinion}. Particularly, we choose chaotic R\"{o}ssler oscillator to represent the evolution of each dynamical units with the pairwise and three-body coupling schemes as $H^{(1)}({\bf x}_i,{\bf x}_{j})=[0,\; y_{j}^{3}-y_{i}^{3},\; 0]^{tr}$ and $H^{(2)}({\bf x}_i,{\bf x}_{j},{\bf x}_{k})=[0,\; y_{j}^{2}y_{k}-y_{i}^{3},\; 0]^{tr}$, respectively. Now, since for adequately fast switching, the time-averaged connectivity structure plays the role of synchronization indicator, we use the time-averaged coupling matrices to investigate the emergence of synchronization. Again, the considered coupling functions satisfy the specific coupling condition \eqref{natural}, therefore, solving the decoupled master stability equation \eqref{natural_stability_4} for maximum Lyapunov exponent $\Lambda_{max}$ provides the synchronization thresholds for the considered simplicial complex.  
     \begin{figure}[ht] 
     	\centerline{
     		\includegraphics[scale=0.35]{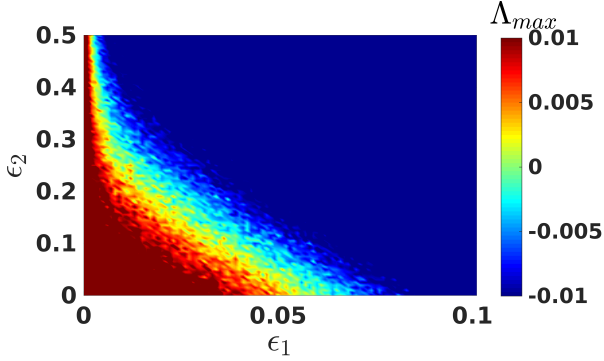}}
     	\caption{{\bf Synchronization in real-world time-varying structure.} Variation of maximum Lyapunov exponent $\Lambda_{max}$ in the $(\epsilon_{1},\epsilon_{2})$ parameter plane for time-averaged connection structure. Vertical colorbar represents the variation of $\Lambda_{max}$ with deep blue describing the region of synchronization obtained from Eq. \eqref{natural_stability_4}.}
     	\label{real_eps1_eps2}
     \end{figure}
     \par Figure \ref{real_eps1_eps2} depicts the region of synchronization by concurrently varying pairwise coupling strength $\epsilon_{1} \in [0,\;0.1]$ and triadic coupling strength $\epsilon_{2} \in [0,\;0.5]$. Parameter region with negative value of $\Lambda_{max}$ corresponds to stable synchronization domain and the unstable synchronization region is represented by the positive value. As observed, the critical value of $\epsilon_{1}$ for the emergence of complete synchronization decreases with the increasing value of triadic coupling strength $\epsilon_{2}$, i.e., the presence of triadic interactions play an important role in the enhancement of synchronization. Figure \ref{real_eps1_eps2} displays another interesting observation that the sole presence of three-body interactions (i.e., when $\epsilon_{1}=0$) is not sufficient for the achievement of synchronization, while the occurrence of synchronous state can be guaranteed when only pairwise interactions are taken into consideration.         
     \section{Discussions} \label{conclusion}
     In this article, we looked into the combined impact of higher-order interactions and temporal connectivity for the emergence of collective phenomena in complex networked systems. Particularly, we investigated the synchronization phenomenon in temporal simplicial complexes. To do this, we proposed the most general mathematical model accounting for group interactions of any order between interactive units of arbitrary nature, coupled through coupling functions of arbitrary functional form. Under the existence and invariance conditions of synchronization solution, we acquire the necessary requirement for stable synchronization state when the connectivity structure of the simplicial complex varies rapidly in time. We have shown that the time-averaged simplicial structure plays the role of synchronization indicator when the switching among connectivity structures happens quickly. Our stability approach generalizes the conventional MSF scheme to temporal higher-order structures, although the complexity of temporal connectivity and group interactions result in a high dimensional master stability equation coupled through generalized time-averaged Laplacians. But, we have provided some relevant instances for which our approach mimics the MSF conjecture and provides fully decoupled master stability equation with dimension equal to the dimension of an individual dynamical unit. All our analytical deductions have been accomplished by numerical results on synthetic and real-world connection topologies. Our investigation shows that the region of synchronization significantly increases for rapid switching. Additionally, we also demonstrate that the synchronous state is attainable even in low connectivity regime of the temporal simplicial complex with significant higher-order coupling and sufficiently quick rewiring.
     \par We point out that our theoretical analysis has two primary restrictions. The first limitation is that the stability criterion is eligible for only rapid switching among the structures of simplicial complexes. We are unable to draw a conclusion about the stability of synchronization for arbitrary switching. So, in near future the stability of synchronous state in temporal simplicial complex for arbitrary switching may be investigated. The other limitation is that we have provided only the local stability of synchronization solution which does not provide any insight about the global stability. As a result, another intriguing area for future research is the examination of the global stability of synchronous state in temporal higher-order structure.
     \par In spite of these limitations, we expect that our findings open the door for various innovative studies. Due to the proposition of a generalized approach irrespective of governing system, coupling schemes and order of higher-order interactions, our discoveries might become beneficial in a variety of practical and experimental situations to derive a number of theoretical predictions on the occurrence of synchronous state.                     
     \appendix  
     \section{Stability of synchronization in temporal simplicial complexes with generic coupling functions}\label{generic coupling}  
     \par We now extend our results to synchronization stability problem in temporal simplicial complex \eqref{gen_model} with arbitrary linear or nonlinear pairwise and higher-order coupling schemes between the unitary elements. This implies that in this scenario the coupling functions do not satisfy the synchronization non-invasive condition \eqref{noninvasive}, i.e., $ H^{(m)}({\bf x}_0,{\bf x}_0,{\bf x}_0,\cdots,{\bf x}_0) \ne 0$, for all $m=1,2,\cdots,P$. 
     As we remove restrictions from the coupling functions and consider any arbitrary type of coupling functions, the invariance condition for the complete synchrony does not remain trivial anymore.
     \par Therefore to derive the invariance condition we assume that each node of the simplicial complex begins to evolve with the synchronous state $\mathbf{x}_0$ at a time instance $t = t_0$. Then, the $i^{\mbox{th}}$ node of the simplicial complex advances according to the following equation of motion, 
     \begin{widetext}
     	\begin{equation}\label{invariance_1}
     	\begin{array}{l}
     		\dot{\bf x}_i(t_{0}) = F({\bf x}_0)+ \epsilon_{1}d_{(in)_i}^{(1)}(t_{0}) H^{(1)}({\bf x}_0,{\bf x}_0) + 2 \epsilon_{2} d_{(in)_i}^{(2)}(t) H^{(2)}({\bf x}_0,{\bf x}_0,{\bf x}_0) \\ ~~~~~~~~~~~~~~~+\cdots+P! \epsilon_{P} d_{(in)_i}^{(P)}(t) H^{(P)}({\bf x}_0,{\bf x}_0,{\bf x}_0,\cdots,{\bf x}_0),
     	\end{array}
     \end{equation}
     \end{widetext}
     where $d_{(in)_i}^{(1)}(t_{0})= {\sum\limits_{j=1}^{N}{\mathscr{B}_{ij}^{(1)}}}(t_{0})$ denotes the number of links incident to node $i$ at time $t_{0}$, i.e., in-degree of node $i$ at time $t_{0}$ and $d_{(in)_i}^{(m)}(t_{0})= \frac{1}{m!}{\sum\limits_{k_1=1}^{N}\sum\limits_{k_2=1}^{N}\cdots\sum\limits_{k_m=1}^{N}{\mathscr{B}_{ik_1k_2\cdots k_m}^{(2)}}}(t_{0})$, $m=2,3,\cdots,P$ indicates the number of $m$-simplices incident to node $i$ at the time instance $t_{0}$, i.e., generalized in-degree of node $i$ at the time instance $t_{0}$. Now, to preserve the synchronization solution, every node must advance at the same rate. Therefore, the velocity of two different nodes $i$ and $j$ of the simplicial complex should be the same, i.e.,
     \begin{equation} \label{equal_velocity}
     	\begin{array}{l}
     		\dot{\mathbf{x}}_j(t_{0}) = \dot{\mathbf{x}}_i(t_{0}), \;\; {\mbox{for}~j\ne i,~\mbox{and}}~j,i=1,2,\dots,N.	
     	\end{array}
     \end{equation}
     Since we consider generic coupling functions, so to satisfy Eq. \eqref{equal_velocity} the connection topologies of the temporal simplicial complex \eqref{gen_model} must fulfill the following condition, 
     \begin{equation} \label{invariance_2}
     	\begin{array}{l}
     		d_{(in)_i}^{(m)}(t_{0})=d_{(in)_j}^{(m)}(t_{0}), \; m=1,2,\cdots,P  .
     	\end{array}
     \end{equation}
     {\it Therefore, an equal number of $m$-simplices must incident to every node at any specific time instance in order to achieve the synchronization solution.} But, in this work we consider only the simplicial complexes with undirected pairwise and group interactions. So, in the present instance, $d_{(in)}^{(1)}(t)=d^{(1)}(t)$, the degree of each node and $d_{(in)}^{(m)}(t)=d^{(m)}(t)$, the generalized degree of each node must be equal at any specific time instance. We further consider that the number of edges and $m$-simplices incident to the nodes are time-independent. 
     \par As a result, the synchronization solution $\bf x_{0}$ advances as follows,
     \begin{equation}\label{sync_sol2}
     	\begin{array}{l}
     		\dot{\bf x}_0(t_{0}) = F({\bf x}_0)+\epsilon_{1} d^{(1)} H^{(1)}({\bf x}_0,{\bf x}_0)  + 2 \epsilon_{2} d^{(2)} H^{(2)}({\bf x}_0,{\bf x}_0,{\bf x}_0) \\ ~~~~~~~~~~~~~ +\cdots + \epsilon_{P} P! d^{(P)} H^{(P)}({\bf x}_0,{\bf x}_0,{\bf x}_0,\cdots,{\bf x}_0).
     	\end{array}
     \end{equation}
     \par Now to study the stability of synchronization solution ${\bf x}_{0}$, we introduce a small perturbation around the synchronization manifold as $\delta {\bf x}_i= {\bf x}_i-{\bf x}_0$ and perform the linear stability analysis. Hence, the variational equation in terms of the stake variable $\delta {\bf x}_i$ becomes follows
     \begin{widetext}
     	\begin{equation}\label{nonlinear_time_var1}
     	\begin{array}{l}
     		\delta \dot{\bf x}_{i}= \big[Jf({\bf x}_{0})+\sum_{m=1}^{P}\epsilon_{m}m!d^{(m)}\mathcal{D}H^{(m)}\big]\delta {\bf x}_{i}- \sum_{m=1}^{P}\epsilon_{m} \sum_{j=1}^{N} \mathscr{L}^{(m)}_{ij}(t) \mathcal{J}H^{(m)}\delta {\bf x}_{j},
     	\end{array}
     \end{equation}
     \end{widetext} 
     where for each $m=1,2,\cdots,P$, $\mathcal{D}H^{(m)}$ represents the sum of all partial derivatives of the coupling function $H^{(m)}({\bf x}_i,{\bf x}_{k_1},{\bf x}_{k_2},\cdots,{\bf x}_{k_m})$ at the synchronization solution, i.e., $\mathcal{D}H^{(m)}=H^{(m)}_{{\bf x}_{i}}({\bf x}_{0},{\bf x}_{0},\cdots,{\bf x}_{0})+H^{(m)}_{{\bf x}_{k_1}}({\bf x}_{0},{\bf x}_{0},\cdots,{\bf x}_{0})+\cdots+H^{(m)}_{{\bf x}_{k_m}}({\bf x}_{0},{\bf x}_{0},{\bf x}_{0},\cdots,{\bf x}_{0})$.
     \par Following the conceptual steps discussed in Sec. \ref{stability analysis}, one can easily prove that in this case also the time-averaged simplicial structure plays the role of synchronization indicator. Consequently, the dynamics of the transverse modes associated with the time-averaged structure provide the condition for synchronization stability. Proceeding with the similar analysis presented in Sec. \ref{stability analysis}, we can derive that in the present scenario, the dynamics of the transverse modes are given by the following set of coupled linear equations,
     \begin{widetext}
     	\begin{equation}\label{nonlinear_average_trans2}
     	\begin{array}{l}
     		\dot{\bar{\zeta}}_{{\perp}_{i}}= [Jf({\bf x}_{0})+\sum_{m=1}^{P}\epsilon_{m}m!d^{(m)}\mathcal{D}H^{(m)}-\epsilon_{1}{\lambda}^{(1)}_{i} \mathcal{J}H^{(1)}]{\bar{\zeta}}_{{\perp}_{i}} \\ ~~~~~~~~~~~~~~ -\sum_{m=2}^{P}\epsilon_{m}\sum_{j=2}^{N}\bar{W}^{(m)}_{{2}_{ij}} \mathcal{J}H^{(m)} \bar{\zeta}_{\perp_{j}}, \;\;\;\; i=2,3,\cdots,N.
     	\end{array}
     \end{equation}
     \end{widetext}
     This is the required master stability equation for complete synchronization in a temporal simplicial complex with arbitrary linear or non-linear coupling schemes. Solving equation \eqref{nonlinear_average_trans2} for the calculation of maximum Lyapunov exponent $\Lambda_{max}$, and finding out the region of negative $\Lambda_{max}$ as a function of coupling strengths and connectivity topology gives the sufficient condition for synchronization stability.
     \begin{remark}
     	We would like to emphasize on the fact that all the relevant instances regarding dimension reduction of the master stability equation discussed earlier in sections \ref{natural_coupling} and \ref{reduction_2} also works perfectly in the present case, only under an extra condition \eqref{invariance_2} which is however necessary for the existence of synchronous solution.   
     \end{remark}

	\section*{Data availability}
     The data corresponding to the real-world example is publicly available at \url{http://www.sociopatterns.org}.

	\bibliographystyle{apsrev4-1} 
	\bibliography{temporal_simplicial_references}
	
\end{document}